\documentclass[%
 reprint, 
 amsmath,amssymb,
 aps, physrev,
]{revtex4-2}

\usepackage[utf8]{inputenc}
\usepackage[T1]{fontenc}
\usepackage{graphicx}
\usepackage{dcolumn}
\usepackage{bm}

\usepackage{booktabs}
\usepackage{amsfonts}
\usepackage{nicefrac}
\usepackage{microtype}
\usepackage{amsthm} 
\usepackage{mathptmx}

\usepackage{braket}   
\usepackage{mathtools}
\usepackage{tabularx}
\usepackage{threeparttable}
\usepackage{makecell}
\usepackage{mhchem}

\usepackage{hyperref}
\hypersetup{
    pdftitle={Scalable O(1) Entanglement Routing},
    pdfauthor={Xiaoyi Zheng, Chan-Tong Lam},
}

\newtheorem{lemma}{Lemma}
\newtheorem{theorem}{Theorem}

\let\bell\relax 
\newcommand{\bell}{\ensuremath{\ket{\Phi^{+}}}}         
          
\newcommand{\cz}{\ensuremath{\mathrm{CZ}}}             
\newcommand{\cluster}{\ensuremath{\mathrm{LC}}}         
             
\newcommand{\CZ}{\text{CZ}}
\newcommand{\LC}{\text{LC}}

\newcounter{protocol}

\begin{document}

\title{Scalable Graph State Generation with $O(1)$ Local Feedforward in Quantum Networks}

\author{Xiaoyi Zheng}
\email{xiaoyi.zheng@mpu.edu.mo}
\affiliation{%
 Faculty of Applied Sciences, Macao Polytechnic University, 999078, Macao SAR, China
}%

\author{Chan-Tong Lam}
\email{ctlam@mpu.edu.mo}
\affiliation{%
 Faculty of Applied Sciences, Macao Polytechnic University, 999078, Macao SAR, China
}%

\author{Lin Chen}
\email{lchen@mpu.edu.mo}
\affiliation{%
 Faculty of Applied Sciences, Macao Polytechnic University, 999078, Macao SAR, China
}%

\author{Zheng Xing}
\email{201135@gwng.edu.cn}
\affiliation{%
 School of Computer Science, South China Business College, Guangdong University of Foreign Studies, Guangzhou 510545, Guangdong, China
}%

\date{\today}

\begin{abstract}
The development of quantum networks faces a key challenge: the contradiction between probabilistic long-range entanglement generation and finite coherence time. Existing routing protocols typically focus on global state computation or path optimization. As the network scales up, classical delays accumulate and exacerbate decoherence, leading to a decrease in entanglement fidelity. To reduce routing decision delays to levels far below the coherence time of qubits, we propose a protocol based on local measurement and classical feedforward. This protocol reduces the local decision complexity to amortized $O(1)$, ensuring that the decision delay is always much smaller than the coherence time of qubits. We map this protocol onto a dual-species trapped-ion platform and perform hybrid simulations. The results show that the proposed protocol performs well in terms of both resource efficiency and time feasibility. Noise analysis indicates that readout fidelity is the main bottleneck of this protocol, but noise suppression can be achieved by employing an erasure transformation in the dual-species architecture, combined with spatial multiplexing and branch independence, thereby ensuring the generation of high-fidelity star subgraphs. This protocol provides a clear path to achieving high-fidelity star subgraphs. These subgraphs can serve as general modules, merging to construct arbitrary subgraphs, providing a feasible solution for future fault-tolerant distributed quantum computing.
\end{abstract}

\keywords{Entanglement Routing, Trapped-Ion Networks, Graph States, Erasure Conversion, Distributed Quantum Computing}

\maketitle

\section{Introduction}
\label{sec:Introduction}
The generation of distributed multi-party entangled states is a crucial research area in quantum networks and distributed quantum computing. Although experimental progress has been made in preparing Bell states between remote nodes~\cite{ref1,ref2,ref3,ref4,ref5,ref6,ref7,ref8,ref10,ref11,ref12}, many challenges remain in fusing these fundamental entanglements into complex multi-party graph states. One such challenge is the discrepancy between the probabilistic success of entanglement distribution and the finite coherence time of qubits. This timescale mismatch requires established links to be used within a time much shorter than the coherence time of qubits to minimize fidelity degradation caused by decoherence.

In terms of physical implementation, Trapped-ion systems, due to there long coherence times~\cite{ref16,ref17} and high-fidelity quantum logic gates and quantum measurements~\cite{ref13,ref14,ref15}, hold promise as a feasible platform for future high-performance quantum networks. Related experimental progress has also demonstrated this characteristic: for example, barium ion-based architectures have achieved local gate fidelity exceeding 99.9\%~\cite{ref_helios}; recently, a team from Tsinghua University~\cite{ref_tsinghua} also achieved long-range ion-ion entanglement with $\approx$96\% fidelity over a 1.2-kilometer optical fiber.

However, current routing protocols can be mainly divided into two paradigms: centralized protocols~\cite{refcentral1,refcentral2,refcentral3} and distributed protocols. The former is limited by the star topology, easily leading to single-point performance bottlenecks and making it difficult to support large-scale long-distance networks; the latter usually assumes that network nodes have functional homogeneity and is highly dependent on global state synchronization~\cite{refMMG,refCKL} or complex pathfinding algorithms~\cite{refFan,refevan}. However, as the network scale expands, the cumulative delay of classical signal transmission and path calculation will significantly exacerbate the decoherence effect of quantum states, thus causing the entanglement fidelity to decrease significantly with increasing scale.

To overcome the limitations of latency on the scalability of quantum networks, this paper introduces a heterogeneous network architecture from a practical perspective. By distinguishing between user nodes and relay nodes, it breaks the assumption of homogeneous node functions in existing distributed protocols and simplifies the link interconnection model. Based on this, this paper proposes a novel protocol that does not require global state synchronization or path search, relying only on local measurement and classical feedforward mechanisms. This protocol maximizes the compression of classical communication and computational latency, reducing control complexity to an amortized $O(1)$ level, ensuring that the decision delay is much smaller than the coherence time of a qubit, thereby guaranteeing the fidelity of multi-qubit entangled states.

The protocol design originates from the Pauli measurement theory of graph states~\cite{refgraph2,refgraph3}. Our research shows that when the neighborhood of the qubit to be measured constitutes an independent set, the secondary operators caused by the Pauli-X measurement (which typically exhibit complex nonlocality in general graph states) can be systematically simplified. The absence of edges within the neighborhood allows the originally cumbersome multi-qubit Pauli framework correction to collapse into deterministic local single-qubit unitary corrections. We design a protocol around this characteristic so that the graph state recovery operation does not rely on global topological metadata or perform multi-body joint operations, thereby reducing the network decision complexity to a local $O(1)$ order of magnitude. For the relevant formal proofs, see (Lemma ~\ref{lem:star-formation}, Lemma ~\ref{lem:star-phase} and Theorem ~\ref{thm:x-simplify}).

To evaluate the performance of our protocol on a physical platform, we map it onto a dual-species trapped ion platform. We define a hardware-aware architecture based on a dual-species paradigm to physically isolate routing operations from data storage~\cite{refdual1,refdual2,refdual3,refdual4}. Although we assume the network is physically isomorphic, with all nodes equipped with communication qubits (e.g., $^{138}\mathrm{Ba}^{+}$) and storage qubits (e.g., $^{171}\mathrm{Yb}^{+}$), we perform functional separation at the network layer. Specifically, relay nodes use the storage qubits as asynchronous buffers to synchronize the successful distribution of entangled qubits with randomness; while terminal nodes focus on maintaining the final entangled state. This architecture effectively shields the inevitable scattering noise and reset delays during routing within the relay layer.

We evaluated the performance of this protocol using a hybrid simulation method based on standard ion trap physics parameters. The results show that the scheme exhibits excellent resource scalability: the entanglement overhead for a single user converges to the theoretical lower bound $1+1/n$. Simultaneously, the amortized control complexity of $O(1)$ ensures that the decision delay is consistently two orders of magnitude lower than the qubit coherence time limit. For the main scalability bottleneck of readout fidelity, we propose and simulate a collaborative scheme: utilizing the erasure conversion technique of a dual-species trapped ion platform to convert bit-flipping errors into heralded erasures, and applying spatial multiplexing based on the branch independence of star-shaped subgraphs. This strategy successfully guarantees the generation of high-fidelity star-shaped subgraphs. Given that star-shaped subgraphs can construct arbitrary graphs through fusion operations, this architecture provides a clear and feasible technical path for realizing large-scale, fault-tolerant distributed quantum computing.

In summary, the main contributions of this paper can be summarized as follows:
\begin{itemize}
    \item \textbf{Theoretical Operator Reduction:} We theoretically prove that when the neighborhood of a measured qubit constitutes an independent set, the complex multi-qubit byproduct operators caused by Pauli-X measurements systematically collapse into deterministic local single-qubit unitary corrections.
    \item \textbf{Low-Latency $O(1)$ Protocol:} We propose a routing protocol based on local measurement and classical feedforward. By avoiding global state computation, it reduces the local decision complexity to an amortized $O(1)$, ensuring that routing decision delays remain far below the coherence time of qubits.
    \item \textbf{Hardware Mapping and Hybrid Simulations:} We map this protocol onto a dual-species trapped-ion platform. Simulations verify the protocol's resource efficiency and time feasibility, and demonstrate that combining erasure transformation with spatial multiplexing effectively overcomes readout fidelity bottlenecks. This ensures the reliable generation of high-fidelity star subgraphs, which serve as general modules to construct arbitrary topologies.
\end{itemize}

The remaining of the papers is organized as follows. Section~\ref{sec:protocol_design} details the protocol design, including the hardware mapping, theoretical foundations, and the core sub-protocols for subgraph generation and fusion. Section~\ref{sec:simulation} presents the hybrid simulation results, analyzing resource utilization, temporal feasibility, and noise suppression. Section~\ref{sec:comparison} provides a comparison with related works. Finally, Section~\ref{sec:conclusion} concludes the paper.

\begin{figure*}[t] 
\centering
\includegraphics[width=0.85\textwidth]{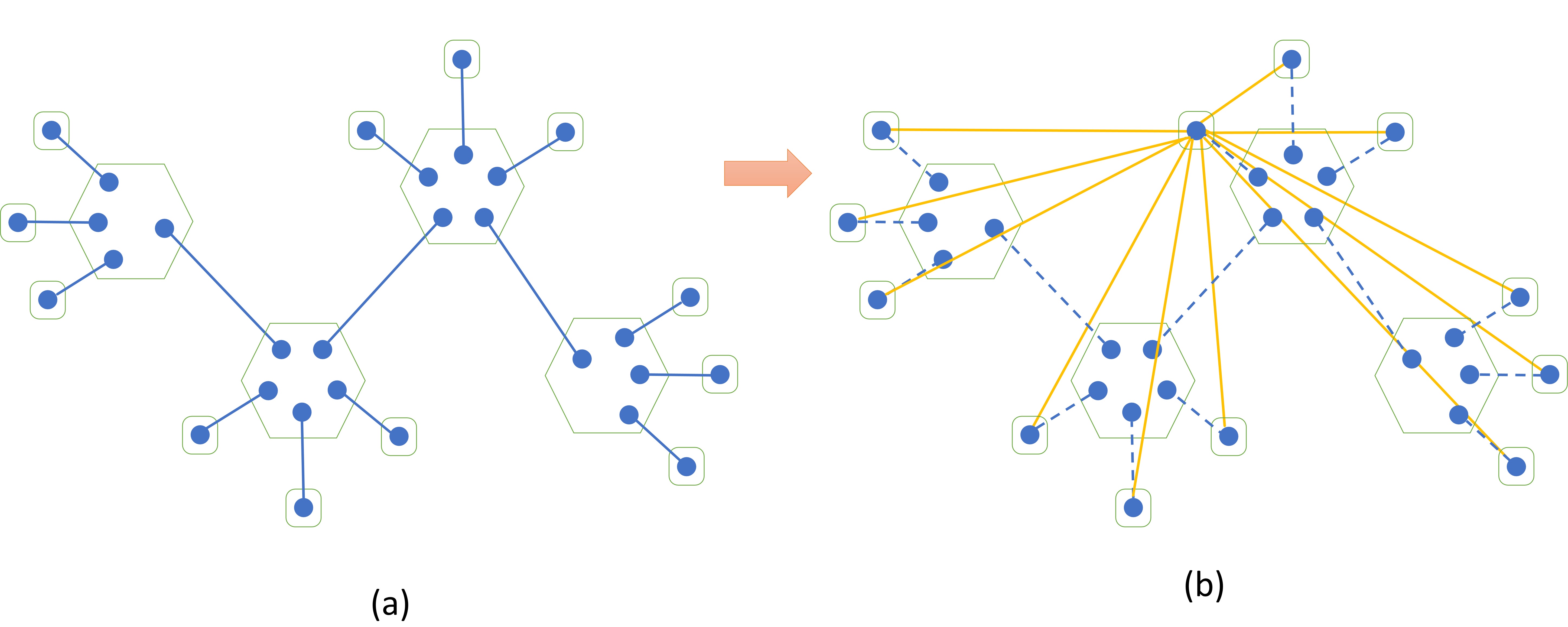}
\caption{%
\textbf{Protocol Architecture and Topology Evolution.} %
(a) \textbf{Initial Resource State.} The blue lines represent Bell states. Relay nodes (hexagons) are linearly connected via Bell states, and each terminal node (square) shares a Bell state with its nearest relay node. %
(b) \textbf{Target Graph State.} The final star-shaped graph (yellow edges represent CZ connections). The topology center is arbitrarily specified, and the final graph only contains terminal nodes; all relay nodes are no longer present in the final graph after measurement. %
} 
\label{fig1}
\end{figure*}

\section{Protocol Design}
\label{sec:protocol_design}

We present a decentralized framework for scalable star-graph generation based on functional node segregation (Fig.~\ref{fig1}a). 
terminal nodes (squares) host user qubits, performing only basic single-qubit operations (Hadamard gates and Pauli-Z gates); conversely, relay nodes (hexagons) constitute the routing infrastructure. performing the controlled Z (CZ) gates and Pauli-X basis measurements. Upon protocol completion, these relay nodes are systematically decoupled from the final state.The network is initialized as follows: each terminal node shares a pre-established Bell pair with its nearest relay node, and relay nodes share pre-established Bell pairs in the form of linear connections. The goal of this protocol is to transform this initial state into a target star graph state consisting only of terminal nodes (Fig.~\ref{fig1}b).

\subsection{Physical Implementation and Hardware Mapping}
\label{subsec:hardware}

To effectively manage the probabilistic nature of entanglement distribution, we map the protocol onto a two-species trapped ion platform. We assume the network is physically homogeneous, meaning each node (including relay and terminal nodes) is equipped with a standardized two-species system, typically using $^{138}\mathrm{Ba}^{+}$ for photon communication and $^{171}\mathrm{Yb}^{+}$ for storage~\cite{refdual1,refdual2}. Although each node is physically homogeneous, we implement functional asymmetry:

\begin{enumerate}
    \item \textbf{Relay Nodes:} These nodes utilize their $^{171}\mathrm{Yb}^{+}$ storage qubits as asynchronous buffers. Since remote entanglement generation is probabilistic and heralded, link establishment events between different ports do not occur simultaneously. The relay node immediately switches the successfully established link in the communication ion ($^{138}\mathrm{Ba}^{+}$) to the storage ion ($^{171}\mathrm{Yb}^{+}$). This buffering mechanism serves two purposes: first, it protects qubits in an established entangled link from scattering noise interference during the establishment of other entangled links; second, it allows multiple independent entangled links to be established simultaneously at different times.
    
    \item \textbf{Terminal Nodes:} These nodes use their $^{171}\mathrm{Yb}^{+}$ qubits to store the final quantum state, and ultimately only need to perform local unitary correction based on the results broadcast by the relay nodes.
\end{enumerate}

Furthermore, we can utilize spatial multiplexing techniques to increase the probability of probabilistic long-range entanglement generation. That is, a relay node can allocate k pairs of communication qubits to a single topological neighbor, thereby increasing the link establishment probability to $1-(1-p)^k$. After the first link is successfully established, subsequent redundant links can continue to be used for entanglement purification. This process ensures that the initial entangled resources can be transformed into a high-fidelity normalized state, as shown in Fig.~\ref{fig1}a.

\subsection{Theoretical Foundation: Operator Reduction}
\label{subsec:theory}
In graph measurement theory~\cite{refgraph1,refgraph2,refgraph3,refgraph4}, when we perform a Pauli-$X$ measurement on a qubit, recovering the canonical form of the remaining graph often requires specific local Clifford transformations on its neighboring qubits. As measurements are performed on different qubits, the resulting multi-qubit byproduct operators accumulate, forming complex logical dependencies. In other protocols, a technique called a Pauli frame is often used to handle this. This involves tracking operator evolution in real time and then publishing it to all nodes in the network via a classical channel. The corresponding qubit only needs to perform a one-time recovery operation before being measured. This is a mathematically ingenious approach, but in a distributed network environment, it introduces severe global feedforward dependencies into the recovery logic. Communication latency increases linearly or polynomially with the network diameter and the number of measuring qubits. This long-range dependency causes qubits to become distorted due to decoherence during the long waiting period, thus affecting the fidelity of quantum states in large-scale quantum networks.

To avoid the aforementioned difficulties, this protocol, based on the neighborhood-independent set condition, controls the communication latency to a localized constant level of $O(1)$ latency. We prove that when the neighboring qubits of the measured qubit are unconnected, i.e., forming a neighborhood-independent set, the multi-qubit byproduct operators generated by the measurement will exhibit a simplified form. This allows the multi-qubit recovery logic, which originally relied on a global Pauli frame, to be simplified into a local single-qubit recovery operation. Therefore, the recovery operation no longer needs to rely on global information or global communication, effectively avoiding long-range dependencies. Below, we describe in detail the lemmas and theorems required to give this conclusion; detailed proofs can be found in ~\ref{app:graph state}.

\begin{lemma}[Local Star Topology Formation]
\label{lem:star-formation}
Given a graph $G = (V, E)$ where the neighborhood $\mathcal{N}(i)$ of vertex $i$ is an independent set. When qubit $i$ undergoes a Pauli-$X$ measurement, the graph state evolves according to the following transformation sequence. This sequence captures the structural changes at the graph representation. For any selected neighbor $j \in \mathcal{N}(i)$, the graph topology updates as follows:

\begin{align}
G_1 &= \mathrm{LC}_j(G), \nonumber\\
G_2 &= \mathrm{LC}_i(G_1), \nonumber\\
G_2^{-i} &= G_2 \setminus \{i\}, \nonumber\\
G'' &= \mathrm{LC}_j(G_2^{-i}).
\end{align}

This transformation sequence generates a localized star topology in the final graph $G''$. This star graph is derived from the vertex set $\{j\} \cup S$, where the set $S = \mathcal{N}(i) \setminus \{j\}$. In this structure, $j$ is the node center, and $S$ is its surrounding leaf node.
Note that the rest of the graph $G''$ retains its original arbitrarily complex topology.
\end{lemma}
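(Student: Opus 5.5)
The plan is to prove the lemma by tracking, edge by edge, how each of the three local complementations and the vertex deletion change the graph. We use the standard rule: $\mathrm{LC}_v$ toggles every edge between pairs of vertices in the current neighbourhood of $v$ and leaves all other edges unchanged.

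We first partition the vertex set relative to the measured vertex $i$ and the chosen neighbour $j$:
\begin{itemize}
\item $S=\mathcal{N}(i)\setminus\{j\}$;
\item $T=\mathcal{N}(j)\setminus\{i\}$;
\item $R=V\setminus(\{i,j\}\cup S\cup T)$, the remaining vertices.
\end{itemize}
The independent-set hypothesis gives three facts that drive everything else. There is no edge inside $S$. There is no edge between $j$ and $S$. And $S\cap T=\emptyset$, because a vertex of $T\cap S$ would be a neighbour of $i$ adjacent to $j\in\mathcal{N}(i)$, which is forbidden.

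Next we follow the sequence one step at a time, recording at each stage the neighbourhoods of $i$ and $j$ and the edges inside $\{j\}\cup S$.
\begin{enumerate}
\item \emph{$G_1=\mathrm{LC}_j(G)$.} This toggles the edges inside $\{i\}\cup T$. Vertex $i$ becomes adjacent to all of $T$, and the edges inside $T$ are complemented. Since $S$ lies outside $\mathcal{N}(j)$, the edges involving $S$ do not change. Hence $\mathcal{N}_{G_1}(i)=\{j\}\cup S\cup T$ and $\mathcal{N}_{G_1}(j)=\{i\}\cup T$.
\item \emph{$G_2=\mathrm{LC}_i(G_1)$.} This toggles every pair inside $\{j\}\cup S\cup T$. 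As a result $j$ becomes adjacent to every vertex of $S$, the set $S$ becomes a complete graph, the $j$--$T$ edges are removed, and the $S$--$T$ and $T$--$T$ pairs are toggled.
\item \emph{$G_2^{-i}$.} Deleting $i$ leaves $\mathcal{N}(j)=S$ exactly. The only neighbours $j$ had before this stage were $i$ and $T$; the $j$--$T$ edges were removed in step 2 and $i$ is now gone.
\item \emph{$G''=\mathrm{LC}_j(G_2^{-i})$.} This complements the edges inside $S$ only. The complete graph on $S$ therefore becomes empty, while the $j$--$S$ edges are untouched, since $\mathrm{LC}_j$ never changes edges incident to $j$.
\end{enumerate}
It follows that in $G''$ vertex $j$ is adjacent to exactly the vertices of $S$ and $S$ is an independent set. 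The subgraph on $\{j\}\cup S$ is therefore a star with centre $j$ and leaves $S$, and $j$ has no other neighbours.

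Finally we address the claim about the rest of the graph. Every edge with neither endpoint in $\{i,j\}\cup S\cup T$ is untouched by all three local complementations, so all $R$--$R$ adjacencies are preserved verbatim. The edges inside $T$ are toggled twice (in steps 1 and 2) and so return to their original state. The $S$--$T$ pairs are toggled once (in step 2). All other $R$--$(S\cup T)$ edges are preserved. The lemma's phrase ``retains its original arbitrarily complex topology'' should accordingly be read as saying that the graph away from the star is carried over, up to these explicit toggles, which do not affect the star on $\{j\}\cup S$; we would state this precisely in the proof.

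The main obstacle is bookkeeping rather than any deep difficulty. The key points are to use the hypothesis to rule out $S\cap T\neq\emptyset$, since otherwise the toggles in step 2 would not separate cleanly, and to recompute $\mathcal{N}(j)$ correctly after each operation. We will handle this with a small table indexed by pairs of classes from $\{j,S,T,R\}$, giving the parity of the number of toggles each pair receives.
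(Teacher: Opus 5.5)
Your proposal is correct and follows essentially the same route as the paper's proof: both track edges through the sequence $\mathrm{LC}_j$, $\mathrm{LC}_i$, deletion of $i$, $\mathrm{LC}_j$, and you add fuller bookkeeping via the partition $\{j\},S,T,R$. Your version is also more accurate than the paper's in two places: you correctly note that the $j$--$S$ edges are absent in $G$ and are created by $\mathrm{LC}_i$, whereas the paper asserts they are ``preserved throughout''; and your explicit tracking of the $S$--$T$ and $j$--$T$ toggles shows that the ``rest of the graph'' clause holds only up to those changes, which the paper's proof never examines.
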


The neighborhood independent set condition is the foundation of our protocol design, which simplifies the algebraic form of the graph state recovery operation.

\begin{lemma}[Operator Reduction in Star Topology]
\label{lem:star-phase}
For a star graph state $|G\rangle$ with center $c$ and leaves $L$ (where $|L| \geq 1$),  the following operator equivalence exists:
\begin{equation}
Z_c H_c \bigotimes_{k \in L} Z_k |G\rangle = H_c |G\rangle
\end{equation}
\end{lemma}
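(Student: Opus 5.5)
The plan is to reduce the claimed identity to the stabilizer condition of the star graph state at its center vertex. First, recall the standard stabilizer formalism for graph states: for a graph $G=(V,E)$, the state $|G\rangle=\prod_{(a,b)\in E}\mathrm{CZ}_{ab}\,|+\rangle^{\otimes |V|}$ is the unique joint $+1$ eigenstate of the generators
\begin{equation}
K_a = X_a \bigotimes_{b\in\mathcal{N}(a)} Z_b, \qquad a\in V .
\end{equation}
For the star graph with center $c$ and leaf set $L$, the neighborhood of the center is exactly $\mathcal{N}(c)=L$. The relevant generator is therefore $K_c = X_c\bigotimes_{k\in L}Z_k$, and we have $K_c|G\rangle=|G\rangle$. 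If one wants the argument self-contained, this follows by commuting $X_c$ through the $\mathrm{CZ}$ gates using $\mathrm{CZ}_{ck}X_c\,\mathrm{CZ}_{ck}=X_cZ_k$ and noting $X_c|+\rangle_c=|+\rangle_c$.

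Second, the key step is to multiply the desired identity on the left by $H_c$, which is invertible with $H_c^2=I$. The statement $Z_cH_c\bigotimes_{k\in L}Z_k|G\rangle=H_c|G\rangle$ is then equivalent to
\begin{equation}
H_cZ_cH_c\bigotimes_{k\in L}Z_k|G\rangle=|G\rangle .
\end{equation}
The operators $Z_k$ for $k\in L$ act on qubits different from $c$, so they commute with $H_c$; this is what lets the tensor product pass through $H_c$. Using $H_cZ_cH_c=X_c$, the left-hand side becomes $X_c\bigotimes_{k\in L}Z_k|G\rangle = K_c|G\rangle$, which equals $|G\rangle$ by the first step. Reversing the chain of equivalences, that is, applying $H_c$ again, yields the lemma.

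There is no real obstacle here; the only point requiring care is bookkeeping of operator order. The product $Z_cH_c\bigotimes_k Z_k$ must be read with $\bigotimes_k Z_k$ acting first, and the non-commuting pair $Z_c,H_c$ must not be swapped. I would also remark on the hypothesis $|L|\ge 1$. It only guarantees that the star graph is nondegenerate, so that $c$ really has neighbors; the same computation goes through formally for $L=\emptyset$, where it reduces to $Z_cH_c|+\rangle=H_c|+\rangle$.
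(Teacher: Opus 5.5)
Your proof is correct and is essentially the paper's argument: both rest on the center stabilizer $K_c = X_c\bigotimes_{k\in L}Z_k$ fixing $|G\rangle$, combined with the Hadamard conjugation relation. The paper substitutes $\bigotimes_{k\in L}Z_k|G\rangle = X_c|G\rangle$ and then uses $H_cX_c = Z_cH_c$, whereas you left-multiply by $H_c$ first and use $H_cZ_cH_c = X_c$; your version also makes clear that the identity holds exactly, not merely up to a global phase.
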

This lemma shows that a collective $Z$ operation on all leaf nodes and a $ZH$ operation on the center node are equivalent to a single Hadamard operation on the center node. This symmetry allows nonlocal byproduct operators, which are originally scattered throughout the network and require multi-party collaborative processing, to be "compressed" and collapsed into local single-bit operations on the center node. Based on this property, we derive a general simplification theorem for recovery logic.

\begin{theorem}[Simplification Theorem for $|+x\rangle$ Measurement]
\label{thm:x-simplify}
When $\mathcal{N}_i(G)$ constitutes an independent set and the $X$-basis measurement at vertex $i$ yields $|+x\rangle$, the recovery operation simplifies to:
\begin{equation}
(U_+^{(i,j)})^{-1} = H_j.
\end{equation}
\end{theorem}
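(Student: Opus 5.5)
We will start from the standard Hein--Eisert--Briegel rule for a Pauli-$X$ measurement on a graph state. For vertex $i$, a chosen neighbor $j\in\mathcal{N}_i$, and outcome $|+x\rangle$, the post-measurement state is $U_+^{(i,j)}|G''\rangle$, with $G''$ exactly the graph produced by the sequence in Lemma~\ref{lem:star-formation}. The byproduct operator is
\begin{equation}
U_+^{(i,j)} = \sqrt{+iY_j}\,\prod_{k\in \mathcal{N}_i\setminus(\mathcal{N}_j\cup\{j\})}Z_k ,
\end{equation}
possibly up to a global phase. The plan has three steps. First, use the independent-set hypothesis to simplify the $Z$-string. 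Second, use Lemma~\ref{lem:star-formation} to identify the local structure of $G''$ around $j$ as a star. Third, rewrite $\sqrt{iY_j}$ in Clifford form so that Lemma~\ref{lem:star-phase} absorbs the $Z$-string.

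For the first step, $\mathcal{N}_i$ is independent, so no $k\in S=\mathcal{N}_i\setminus\{j\}$ is adjacent to $j$. Hence $\mathcal{N}_i\setminus(\mathcal{N}_j\cup\{j\})=S$, and the correction becomes $U_+^{(i,j)}=\sqrt{iY_j}\prod_{k\in S}Z_k$. By Lemma~\ref{lem:star-formation}, $G''$ contains the star centered at $j$ with leaf set $S$. The remainder of the graph is arbitrary, so we will need to argue carefully that the rest of $G''$ does not spoil the star identity; this is addressed in the last step.

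For the third step, we use the single-qubit identity $\sqrt{iY}\propto HZ$, or the appropriate ordering $ZH$ depending on the sign convention; this is a direct $2\times 2$ check up to a global phase. This gives $U_+^{(i,j)}|G''\rangle \propto H_jZ_j\prod_{k\in S}Z_k|G''\rangle$, or the mirrored ordering. Lemma~\ref{lem:star-phase} is stated with $Z_cH_c$ on the left, so we will match orderings using $HZ=XH$ together with the stabilizer relation $X_j\prod_{k\in\mathcal{N}_j}Z_k|G''\rangle=|G''\rangle$. Applying Lemma~\ref{lem:star-phase} with $c=j$ and $L=S$ then collapses $Z_jH_j\prod_{k\in S}Z_k$ to $H_j$. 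This shows the post-measurement state equals $H_j|G''\rangle$ up to phase, and inverting gives $(U_+^{(i,j)})^{-1}=H_j$ as an effective recovery on that state.

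The main obstacle is the last step. It has two parts. The first is bookkeeping of sign and ordering conventions for $\sqrt{\pm iY}$, so that the string really is $Z_jH_j\prod Z_k$ and not one with an extra $X_j$. The second is justifying Lemma~\ref{lem:star-phase} when the star is embedded in a larger graph, where $j$ may have neighbors outside $S$ in $G''$. Our first attempt will go through the stabilizer $K_j=X_j\prod_{k\in\mathcal{N}_{G''}(j)}Z_k$. If $\mathcal{N}_{G''}(j)=S$, which we will try to verify from the LC sequence because $j$'s other neighbors get toggled back by the final $\mathrm{LC}_j$, then the lemma applies verbatim. Otherwise we restrict the claim to the equivalence on the relevant star factor, as the lemma's proof presumably does.
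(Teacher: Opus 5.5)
Your outline is the paper's proof: independence gives $S_+=\mathcal{N}_i\setminus\{j\}=S$, the byproduct becomes $Z_jH_j\prod_{k\in S}Z_k$, and the stabilizer identity behind Lemma~\ref{lem:star-phase} collapses it to $H_j$ on $|G''\rangle$. You have also found the step the paper glosses over. Lemma~\ref{lem:star-phase} uses $\prod_{k\in L}Z_k|G\rangle=X_c|G\rangle$, which needs $L$ to be the \emph{entire} neighborhood of $c$. Lemma~\ref{lem:star-formation}, by contrast, only describes the subgraph induced on $\{j\}\cup S$. (Its claim that the $j$--$S$ edges are ``preserved'' is also inaccurate: they are absent in $G$ and are created by $\mathrm{LC}_i$.)

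However, your proposed mechanism for $\mathcal{N}_{G''}(j)=S$ is wrong. Local complementation at $j$ never changes the edges incident to $j$, so the final $\mathrm{LC}_j$ cannot remove anything. The correct argument is short. Independence of $\mathcal{N}_i$ forces $\mathcal{N}_i\cap\mathcal{N}_j=\emptyset$. So after the first $\mathrm{LC}_j$, the neighborhood of $i$ is the disjoint union $\mathcal{N}_i\cup(\mathcal{N}_j\setminus\{i\})$, while $j$ still has neighborhood $\mathcal{N}_j$. The $\mathrm{LC}_i$ step then toggles every edge from $j$ to $S\cup(\mathcal{N}_j\setminus\{i\})$. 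This deletes $j$'s old neighbors other than $i$ and adds $S$, so $\mathcal{N}_{G_2}(j)=\{i\}\cup S$. Deleting $i$ and applying $\mathrm{LC}_j$ then leave $\mathcal{N}_{G''}(j)=S$. Hence $K_j=X_j\prod_{k\in S}Z_k$ stabilizes $|G''\rangle$, and $Z_jH_j\prod_{k\in S}Z_k|G''\rangle=Z_jH_jX_j|G''\rangle=H_j|G''\rangle$, whatever the rest of $G''$ is. The independence of $S$ inside $G''$ is not even needed.

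The two hedges in your last step would fail if you actually had to use them.

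\emph{Star factor.} Restricting to a ``star factor'' is not available, since $|G''\rangle$ does not factorize. If $j$ had extra neighbors $W$ in $G''$, the same computation would give $U_+^{(i,j)}|G''\rangle=H_j\prod_{w\in W}Z_w|G''\rangle$, and the theorem would be false. So the exact equality $\mathcal{N}_{G''}(j)=S$ must be proved, not bypassed.

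\emph{Ordering.} The ordering is not a convention to be matched afterwards. With $\sqrt{+iY}=e^{i\pi Y/4}=(I+iY)/\sqrt{2}$, a direct check gives $\sqrt{+iY}=ZH$ exactly; this is the paper's $R_y(-\pi/2)=ZH$. By contrast, $HZ=\sqrt{-iY}$ is the byproduct for the other outcome, so your first guess $HZ$ is the wrong one. Had you obtained $H_jZ_j\prod_{k\in S}Z_k$, you would get $H_jZ_jX_j|G''\rangle=iH_jY_j|G''\rangle$. But $Y_j|G''\rangle$ is orthogonal to $|G''\rangle$, because $Y_j$ anticommutes with $K_j$. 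No use of $HZ=XH$ can turn this into $H_j|G''\rangle$. So fix $\sqrt{+iY}=ZH$ and drop the alternative.
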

Theorem ~\ref{thm:x-simplify} forms the core mechanism of our protocol. We prove that, given an independent neighborhood, when the measurement result $M_i=0$, the corresponding recovery operation can be simplified to a Hadamard gate $H_j$. When $M_i=1$, a suitable reference node selection strategy can simplify the recovery operation to $Z_j H_j$. This means that the multi-particle correction, which originally grew with the neighborhood size, is simplified to a single-particle local operation targeting the reference node $j$. This property avoids the chain propagation of the repair operation in the network, thereby minimizing the fidelity degradation caused by the accumulation of classical communication delays in large-scale networks.

\subsection{Protocol~\ref{protocol:star_gen}: Star-Shaped Subgraph Generation Protocol}
\label{subsec:protocol1}
The Protocol~\ref{protocol:star_gen} provides detailed implementation steps. Relay nodes first utilize pre-shared Bell states to construct initial entanglement through local Hadamard and $CZ$ gates. Subsequently, $Pauli-X$ measurements are performed on specific qubits, and the results are broadcast.
Due to the satisfaction of the neighborhood independence condition, each terminal node only needs to respond to a 1-bit classical signal. By performing $O(1)$ local unitary corrections ($H$ or $ZH$ combinations), the nodes collaboratively generate a star subgraph.
This protocol supports parallel execution on all relay nodes. The final set of star subgraphs $\{G_{\text{sub}}^i\}$ constitutes the basic unit for subsequent topology construction.
Figure ~\ref{fig2}(a) illustrates the specific timing flow of protocol~\ref{protocol:star_gen}. Formal proofs are provided in Appendix~\ref{app:protocol1}.

\begin{figure*}[t]
\hrule height 1pt
\vspace{0.5em}
\noindent \textbf{Protocol 1: Star-Shaped Subgraph Generation Protocol} \refstepcounter{protocol}\label{protocol:star_gen}
\vspace{0.5em}
\hrule height 0.5pt
\vspace{0.5em}

\textbf{Input}: 
\begin{itemize}
    \item Relay node $\mathcal{R}$ shares Bell states with terminal nodes.
    \item Initial quantum resource: $\bigotimes_{X\in\mathcal{P}} \bell_{X_1 X_2} \otimes \bell_{T_1 T_2}$ .\\
    where: $\mathcal{P} = \{A,B,C,\ldots\}$ denotes peripheral nodes (does not include $T_1$/$T_2$). \\
    For each $X \in \mathcal{P}$: $X_1$ is the qubit at node $X$, $X_2$ at relay node $\mathcal{R}$. \\
    $T_1$ and $T_2$ are a selected Bell pair, with $T_2$ at $\mathcal{R}$ designated as the control for CZ operations.
\end{itemize}

\textbf{Output}: Star-shaped graph state centered at $T_2$: 
$|G\rangle = \left( \prod_{X\in N(T_2)} \cz_{T_2 X} \right) |+\rangle^{\otimes |N(T_2)| + 1}$ , where neighborhood $N(T_2) = \{T_1\} \cup \{X_1 | X\in\mathcal{P}\}$

\begin{enumerate}
    \item Relay node $\mathcal{R}$ applies Hadamard gates to the Bell qubits it holds to convert them into linear cluster: 
    $\bigotimes_{X\in\mathcal{P}} |\cluster\rangle_{X_1 X_2} \otimes |\cluster\rangle_{T_1 T_2}$, where $|\cluster\rangle = (H \otimes I)|\bell\rangle$
    
    \item Relay node $\mathcal{R}$ applies controlled-Z between $T_2$ and all $X_2$ qubit:  \\
    $\ket{\Psi} = \left( \prod_{X\in\mathcal{P}} \cz_{T_2 X_2} \right) \bigotimes_{X\in\mathcal{P}} |\cluster\rangle_{X_1 X_2} \otimes |\cluster\rangle_{T_1 T_2}$

    \item Relay node $\mathcal{R}$ measures all $\{X_2 | X\in\mathcal{P}\}$ in X-basis. The measurement outcome $M_X \in \{0,1\}$ determines the collapsed state: \\
    $\ket{\psi_{X_2}} = \begin{cases} \ket{+} & M_X=0 \\ \ket{-} & M_X=1 \end{cases}$ ,
    and then broadcasts $\{M_X\}_{X\in\mathcal{P}}$ via classical channel

    \item Peripheral nodes $\mathcal{P} = \{A,B,C,\ldots\}$ and the relay node $\mathcal{R}$ apply recovery operation to the corresponding qubits: 
    $\mathcal{U}_{\text{rec}} = \bigotimes_{X \in \mathcal{P}_0} H_{X_1} \otimes \bigotimes_{X \in \mathcal{P}_1} (Z_{X_1}H_{X_1} )$ \\
    where partitions depend on measurement outcomes: 
    $\mathcal{P}_0 = \{X \in \mathcal{P} | M_X = 0\}$, $\mathcal{P}_1 = \{X \in \mathcal{P} | M_X = 1\}$
\end{enumerate}
\vspace{0.5em}
\hrule height 1pt
\end{figure*}

\begin{figure*}[t] 
\centering
\includegraphics[width=0.85\textwidth]{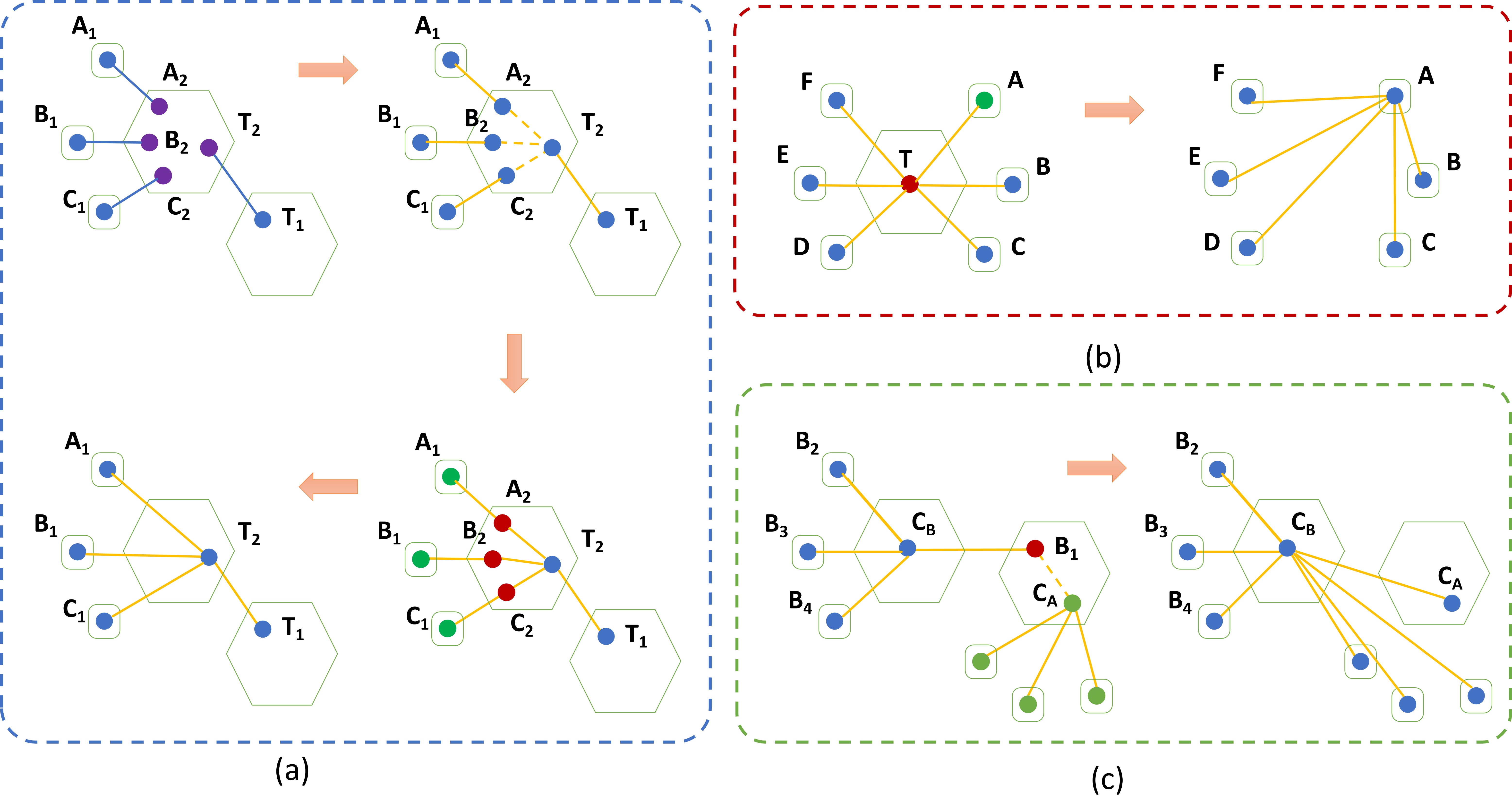}
\caption{
\textbf{Core topological operations governing graph state evolution.} 
(a) \textbf{Protocol 1: Subgraph Generation.} Relay nodes (hexagons) transform pre-shared Bell pairs (blue lines) into star-shaped subgraphs centered on qubit $T_2$ via local Hadamard gates (purple), CZ operations (yellow lines), and X-measurements (red).
(b) \textbf{Protocol 2: Center Transformation.} Dynamic migration of the topological center from node $T$ to $A$ is achieved by measuring $T$ (red) and applying conditional corrections on $A$ (green), without consuming additional entanglement.
(c) \textbf{Protocol 3: Star Fusion.} Two independent stars $G_A$ and $G_B$ are merged by entangling center $C_A$ with bridge node $B_1$, followed by the measurement of $B_1$. This integrates $G_A$ into $G_B$'s structure while preserving the star topology.
}
\label{fig2}
\end{figure*}

\subsection{Protocol~\ref{protocol:center_transform}: Graph Center Transformation Protocol}
\label{subsec:protocol2}
The complete procedural details are provided in Protocol~\ref{protocol:center_transform}, which endows the network with the flexibility to dynamically migrate topology centers on demand. By executing a Pauli $X$-basis measurement on the original central node and broadcasting the classical outcome, the designated successor node $A$ can re-establish the entanglement connectivity. This role transition is finalized by applying local unitary corrections—specifically $H_A$ or $Z_A H_A$—conditioned on the broadcast measurement results. The workflow is illustrated in Figure~\ref{fig2}(b), while the formal derivation is detailed in Appendix~\ref{app:protocol2}.

\begin{figure}[htpb]  
\small                
\hrule height 1pt
\vspace{0.5em}
\noindent \textbf{Protocol 2: Graph Center Transformation Protocol} 
\refstepcounter{protocol}\label{protocol:center_transform}
\vspace{0.5em}
\hrule height 0.5pt
\vspace{0.5em}

\textbf{Input}: 
\begin{itemize}
    \item Star graph state $|G\rangle$ generated by Protocol 1 with center $T$
    \item Designated new center node $A \in N(T)$ (neighborhood of $T$)
    \item Quantum resource: $|G\rangle = \left( \prod_{v \in N(T)} \cz_{T v} \right) |+\rangle^{\otimes |V|}$
    where:
    \begin{itemize}
        \item $V = \{T\} \cup N(T)$ is the vertex set
        \item $N(T) = \{A,B,C,\ldots\}$ denotes peripheral nodes
        \item $A$ is the designated new center
    \end{itemize}
\end{itemize}

\textbf{Output}: Star-shaped graph state with center $A$: $|G'\rangle = \left( \prod_{u \in N'(A)} \cz_{A u} \right) |+\rangle^{\otimes |V|-1}$ \\
where $N'(A) = N(T) \setminus \{A\}$ is the new neighborhood.

\begin{enumerate}
    \item The original center node $T$ undergoes Pauli-X basis measurement. The measurement outcome $M_T \in \{0,1\}$ determines the collapsed state: $\ket{\psi_T} = \begin{cases} \ket{+} & M_T=0 \\ \ket{-} & M_T=1 \end{cases}$ where $M_T$ is subsequently broadcast to all peripheral nodes $\{v \in N(T)\}$ for coordinated correction.

    \item Node $A$ is designated as the new central reference, with the recovery operator applied as: $\mathcal{U}_{\text{rec}} = \begin{cases} H_A  & M_T = 0 \\ Z_A H_A & M_T = 1 \end{cases}$
\end{enumerate}
\vspace{0.5em}
\hrule height 1pt
\end{figure}

\subsection{Protocol~\ref{protocol:star_fusion}: Star Graph State Fusion Protocol}
\label{subsec:protocol3}
The implementation details are provided in Protocol~\ref{protocol:star_fusion}. The fusion process begins by establishing an entanglement link between two independent star-shaped subgraphs $G_A$ (with center $C_A$ and peripheral set $N_A$) and $G_B$ (with center $C_B$ and peripheral set $N_B$) through a controlled-Z (CZ) operation applied between $C_A$ and a designated peripheral node $B_1 \in N_B$. A subsequent Pauli-X basis measurement on $B_1$, coupled with classical communication of the outcome $M_{B_1}$, decouples $B_1$ from the graph via projective collapse. Nodes $C_A$ and its neighborhood $N_A$ then perform conditional corrections based on $M_{B_1}$, which integrates the entire subgraph $G_A$ into $G_B$ as peripheral nodes, thereby subsuming it under the new center $C_B$. 

The resulting fused graph state $\left|G_{\text{fused}}\right\rangle$ is centered on $C_B$. Its expanded neighborhood set is defined as $N_{\text{fused}}(C_B) = N_A \cup \{C_A\} \cup (N_B \setminus \{B_1\})$. This fusion process preserves the star topology invariant throughout. Figure~\ref{fig2}(c) provides a schematic illustration of this workflow. And the complete mathematical proof of correctness and stability analysis are provided in Appendix~\ref{app:protocol3}.

\begin{figure}[htpb]  
\small                
\hrule height 1pt
\vspace{0.5em}
\noindent \textbf{Protocol 3: Star Graph State Fusion Protocol} 
\refstepcounter{protocol}\label{protocol:star_fusion}
\vspace{0.5em}
\hrule height 0.5pt
\vspace{0.5em}

\textbf{Input}: 
\begin{itemize}
    \item Two independent star-shaped graph states:
    \begin{align*}
    |G_A\rangle &= \left( \prod_{u \in N_A} \!\!\!\cz_{C_A u} \right) |+\rangle^{\otimes |V_A|} &
    |G_B\rangle &= \left( \prod_{v \in N_B} \!\!\!\cz_{C_B v} \right) |+\rangle^{\otimes |V_B|}
    \end{align*}
    where:
    \begin{itemize}
        \item $C_A$ and $C_B$ are central nodes
        \item $N_A = N(C_A) = \{A_1, A_2, \dots, A_m\}$ peripheral nodes of $C_A$
        \item $N_B = N(C_B) = \{B_1, B_2, \dots, B_n\}$ peripheral nodes of $C_B$
        \item Designated connection node $B_1 \in N_B$
    \end{itemize}
\end{itemize}
\textbf{Output}: Fused star-shaped graph state centered at $C_B$:
$
|G_{\text{fused}}\rangle = \left( \prod_{w \in N_{\text{fused}}(C_B)} \!\!\!\cz_{C_B w} \right) |+\rangle^{\otimes |V_{\text{fused}}|}
$
where:
\begin{itemize}
    \item Fused neighborhood: $N_{\text{fused}}(C_B) = N_A \cup \{C_A\} \cup (N_B \setminus \{B_1\})$
    \item Total node count: $|V_{\text{fused}}| = |V_A| + |V_B| - 1$
\end{itemize}

\begin{enumerate}
    \item Apply controlled-Z operation between node $C_A$ and node $B_1$: $\cz_{C_A B_1} |G_A\rangle \otimes |G_B\rangle$ to establish entanglement between the two graph states.

    \item Perform Pauli-X basis measurement on node $B_1$. The measurement outcome $M_{B_1} \in \{0,1\}$ determines the collapsed state: $\ket{\psi_{B_1}} = \begin{cases} \ket{+} & M_{B_1}=0 \\ \ket{-} & M_{B_1}=1 \end{cases}$

    \item Implement recovery operation conditioned on measurement outcome $M_{B_1}$: $\mathcal{U}_{\text{rec}} = \begin{cases} H_{C_A} & M_{B_1} = 0 \\ \left( \prod_{v \in N_A } \!\!\! Z_v \right) (Z_{C_A} H_{C_A}) & M_{B_1} = 1 \end{cases}$, 
    Subsequent to correction, node $B_1$ is decoupled from the system.\\
    
\end{enumerate}
\vspace{0.5em}
\hrule height 1pt
\end{figure}

\begin{figure*}[t] 
\centering
\includegraphics[width=0.85 \textwidth]{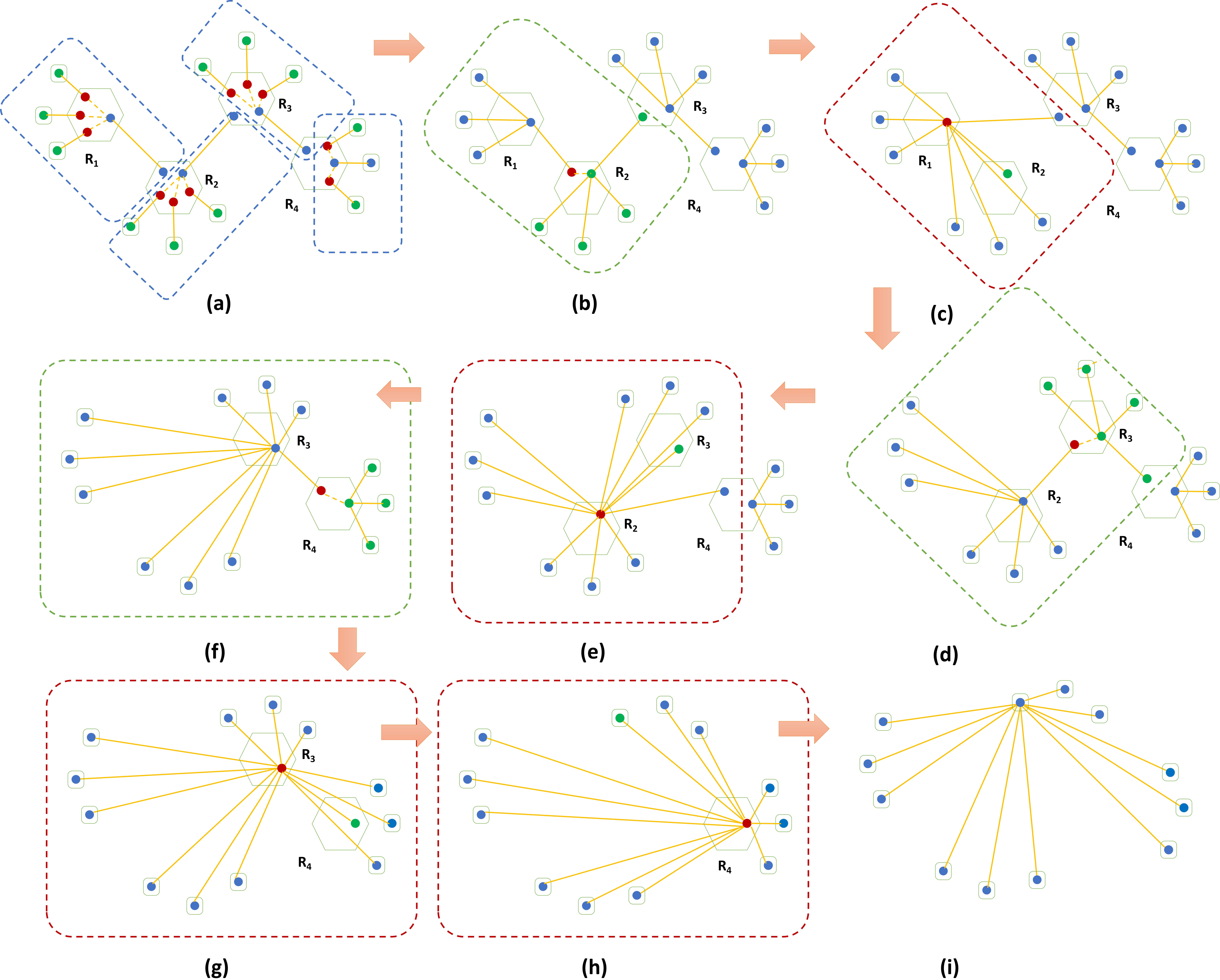}
\caption{
\textbf{End-to-end workflow for scalable star graph construction.}
(a) \textbf{Parallel Initialization:} Simultaneous execution of Protocol~\ref{protocol:star_gen} by all relay nodes to form local subgraphs (blue boxes).
(b) \textbf{Initial Fusion:} Protocol~\ref{protocol:star_fusion} merges the first two subgraphs via bridge node measurement (green box).
(c) \textbf{Center Migration:} Protocol~\ref{protocol:center_transform} shifts the center to a new relay node (red box).
(d)-(g) \textbf{Iterative Expansion:} Alternating execution of fusion and migration protocols progressively expands the topology.
(h)-(i) \textbf{Finalization:} The final center is designated among terminal nodes, and all remaining relay nodes are decoupled, yielding the target user-plane star graph (i).
}
\label{fig3}
\end{figure*}

\subsection{Integrated Workflow}
\label{subsec:integrated_workflow}
Through the coordinated execution of the three sub-protocols ~\ref{protocol:star_gen}--\ref{protocol:star_fusion}, we can achieve the gradual construction of large-scale star topologies. Taking the target topology in Figure ~\ref{fig1}(b) as a reference, Figure ~\ref{fig3} details its step-by-step evolution of the assembly process.

The workflow begins with the parallel generation of local subgraphs (Figure~\ref{fig3}(a)), during which all relay nodes synchronously execute the Protocol~\ref{protocol:star_gen}. The terminal nodes are driven to perform an $O(1)$ level local recovery operation through local Hadamard transforms, CZ gate operations, and classical feedforward signals generated by $Pauli-X$ measurement, thereby establishing the initial set of disjoint star subgraphs.

Subsequently, the process enters a dynamic iterative cycle of fusion and migration. As shown in Figure~\ref{fig3}(b), Protocol~\ref{protocol:star_fusion} merges adjacent subgraphs by establishing entanglement between a relay center and the peripheral nodes of neighboring subgraphs, followed by measurement-induced reduction to consolidate the structure. Immediately thereafter, Protocol~\ref{protocol:center_transform} (Figure~\ref{fig3}(c)) is invoked to migrate the topological center, preparing the system for the ensuing fusion step. 
This alternating sequence (Figures~\ref{fig3}(d)--(g)) gradually integrates all subgraphs into a large star-shaped graph.

In the final stage (Figure~\ref{fig3}(h)), the topology center is transferred to the designated terminal node by finally executing the Protocol~\ref{protocol:center_transform}. This step makes the target graph state (Figure ~\ref{fig3}(i)) entirely composed of terminal nodes.

\section{Simulation Results and Performance Analysis}
\label{sec:simulation}
We conducted simulations using Qiskit Aer to evaluate the protocol's performance. The simulations included five dimensions: resource utilization, temporal feasibility, noise analysis, scalability limit analysis, and scalability via erasure conversion.
The parameter settings used in the experiments were the latest experimental parameters of the dual-species trapped ion platform mapped by the protocol. Unless otherwise specified, the network size was set to $m \in [3, 12]$ relay nodes and $n \in [3, 6]$ peripheral nodes. For simulations involving noise, the results were the average of $N=2000$ Monte Carlo experiments.

\subsection{Resource Utilization}
\label{subsec:resource_efficiency}

We use two metrics to measure the resource utilization of the protocol. $\mathcal{C}_{\text{ent}}$ represents the number of Bell states consumed by a single qubit in the final state, and $\rho$ represents the number of quantum gates consumed by a single qubit in the final state.
Since this protocol has deterministic topological characteristics, we can obtain analytical expressions for these two metrics through analytical analysis. For $\mathcal{C}_{\text{ent}}$, assuming there are $m$ relay nodes in the network, each relay node has $n$ terminal nodes, and the required number of Bell states consists of relay-terminal links $mn$ and relay-relay links $m-1$. Therefore, the analytical expression for $\mathcal{C}_{\text{ent}}$ is as follows:

\begin{equation}
\mathcal{C}_{\text{ent}}(m,n) = \frac{N_{\text{links}}}{N_{\text{qubits}}} = \frac{mn + (m-1)}{mn} = 1 + \frac{1}{n} - \frac{1}{mn}.
\label{eq:analytical_cost}
\end{equation}

This derivation shows that resource overhead is primarily determined by the number of peripheral nodes $n$. As the network size increases ($m \to \infty$), the entanglement cost monotonically converges to an asymptotic lower bound of $1 + 1/n$.

\begin{figure*}[t] 
\centering
\includegraphics[width=0.85 \textwidth]{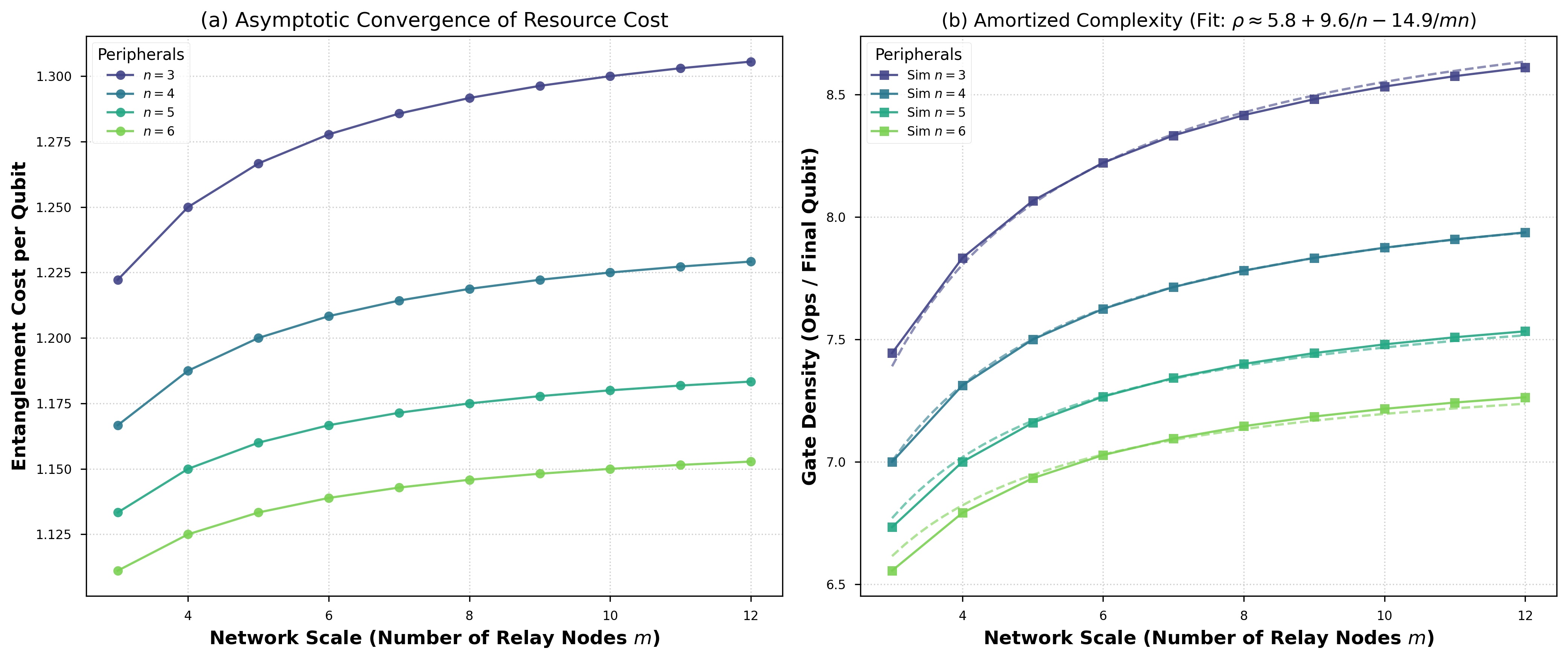}
\caption{\textbf{Resource Utilization.} (a) The entanglement cost per qubit, $\mathcal{C}_{\text{ent}}$, matches the prediction of the analytical formula (Eq.~\ref{eq:analytical_cost}) and converges to the asymptotic limits $\approx 1.33$ ($n=3$) and $\approx 1.16$ ($n=6$). (b) The gate density exhibits a bounded saturation mode, highly matching the prediction of the model $\rho \approx c_{\text{base}} + c_{\text{overhead}}/n - c_{\text{corr}}/mn$ (dashed line). This confirms the existence of initialization overhead at small scales, but the control complexity asymptotically saturates to a constant, thus verifying the amortized $O(1)$ scalability.} 
\label{fig:resource_complexity}
\end{figure*}

As shown in Figure ~\ref{fig:resource_complexity}(a), the numerical simulation results are in high agreement with the analytical expression predictions. Under different values of $n$, $\mathcal{C}_{\text{ent}}$ approximates different constant values. For example, when $n=6$, $\mathcal{C}_{\text{ent}}$ quickly stabilizes at $\approx 1.16$. Furthermore, the larger $n$ is, the smaller the corresponding constant value, indicating that the bell state resource overhead of the relay nodes introduced in this protocol is effectively distributed to each terminal node.

For $\rho$, we can see from Figure \ref{fig:resource_complexity}(b) that it is highly similar to Figure \ref{fig:resource_complexity}(a). We can establish the following model using a similar approach, assuming the total number of quantum gates required for the entire protocol process is $G_{\text{total}}$, which can be considered to consist of two parts. First, the number of quantum gates required in the relay-terminal interaction process, whose size is proportional to $mn$. Second, the number of quantum gates required in the relay-relay interaction process, whose size is proportional to $m-1$. Dividing the sum of the two terms by the total number of qubits in the final state $N=mn$, we can obtain the analytical model of $\rho$ as follows:
\begin{equation}
\rho(m,n) = \frac{G_{\text{total}}}{mn} \approx c_{\text{base}} + \frac{c_{\text{overhead}}}{n} - \frac{c_{\text{corr}}}{mn}.
\label{eq:gate_density_model}
\end{equation}

In this model, $c_{\text{base}}$ represents the intrinsic quantum gate depth lower bound of the protocol, $c_{\text{overhead}}$ represents the unit additional overhead introduced by the relay node, and $c_{\text{corr}}$ represents the boundary correction term. Experimental results show that the analytical expression and simulation data are in high agreement. As the network size increases $m \to \infty$, $\rho(m,n)$ will also gradually tend to a constant. In fact, the bounded convergence of the two metrics $\mathcal{C}_{\text{ent}}$ and $\rho(m,n)$ proves that our protocol has the scalability of amortized $O(1)$. This fundamentally alleviates the superlinear expansion bottleneck caused by global state synchronization and path search in other routing schemes.

\subsection{Temporal Feasibility}
\label{subsec:temporal_feasibility}

To evaluate the temporal feasibility of the protocol in wide-area quantum networks, specifically by analyzing the relationship between the protocol execution time and the decoherence time of the qubits, we constructed a time simulation model based on real-world physical experimental parameters. This model follows the workflow process shown in section \ref{subsec:integrated_workflow}, and the total execution time of the protocol, $T_{\text{exec}}$, can be expressed as:

\begin{equation}
T_{\text{exec}} \approx T_{\text{init}} + (N-1) \cdot (T_{\text{fusion}} + T_{\text{mig}}) + T_{\text{final}}.
\end{equation}

In this model, all physical parameters are set according to experimental parameters, as shown in Table \ref{tab:hardware_params}.

\begin{table}[htpb]
\caption{Physical hardware and network parameters utilized in the temporal feasibility model.}
\label{tab:hardware_params}
\footnotesize 
\begin{ruledtabular} 
\begin{tabular}{lll} 
\textbf{Symbol} & \textbf{Description} & \textbf{Value} \\ 
\hline 
$T_{1Q}$ & Single-qubit gate operation time & $10\,\mu\text{s}$ \\
$T_{2Q}$ & Two-qubit gate operation time & $100\,\mu\text{s}$ \\
$T_{\text{meas}}$ & Qubit measurement and readout time & $150\,\mu\text{s}$ \\
$T_{\text{proc}}$ & Classical signaling processing latency & $50\,\mu\text{s}$ \\
$L_{\text{link}}$ & Relay-to-Relay link distance & $50.0\,\text{km}$ \\
$L_{\text{access}}$ & Terminal-to-Relay link distance & $10.0\,\text{km}$ \\
$c_{\text{fiber}}$ & Effective signal propagation speed in fiber & $2 \times 10^5\,\text{km/s}$ \\
$T_{\text{coherence}}$ & Coherence limit of $^{171}\mathrm{Yb}^{+}$ & $1.0\,\text{s}$ \\ 
\end{tabular}
\end{ruledtabular} 
\end{table}

First, $T_{\text{init}}$ corresponds to the parallel generation stage of the local star subgraph. Its latency is composed of the local double-bit gate, measurement, terminal-relay link transmission latency $T_{\text{lat\_access}}$, and the single-bit gate operation that triggers correction, i.e., $T_{\text{init}} \approx T_{2Q} + T_{\text{meas}} + T_{\text{lat\_access}} + T_{1Q}$. Here, $T_{\text{lat\_access}} = (L_{\text{access}} / c_{\text{fiber}}) + T_{\text{proc}}$ includes the signal flight time of the terminal-relay link $L_{\text{access}} = 10.0$ km and the classical logic processing latency $T_{\text{proc}}$ within the node.

Second, the network mainly consists of $m-1$ sequential iterations between relay nodes. A single iteration consists of two consecutive processes: subgraph fusion ($T_{\text{fusion}}$) and center migration ($T_{\text{mig}}$). The subgraph fusion component $T_{\text{fusion}}$ encompasses the time for cross-node CZ gates, measurement, transmission of results via relay-relay links, and triggering local correction, i.e., $T_{2Q} + T_{\text{meas}} + T_{\text{lat\_backbone}} + T_{1Q}$; while the center migration component $T_{\text{mig}}$ characterizes the latency of measuring the current logical center and forwarding the result to the next-hop relay node to trigger subsequent operations, i.e., $T_{\text{meas}} + T_{\text{lat\_backbone}} + T_{1Q}$. Here, $T_{\text{lat\_backbone}}$ is composed of the fiber delay determined by the trunk-to-trunk link length $L_{\text{link}} = 50.0$ km and the classical processing time $T_{\text{proc}}$.

Finally, $T_{\text{final}}$ in the formula represents the delivery stage to the end user, whose delay is mainly affected by end-point measurements, terminal-to-trunk link feedback, and local logic correction, i.e., $T_{\text{final}} \approx T_{\text{meas}} + T_{\text{lat\_access}} + T_{1Q}$.

\begin{figure}[htbp]
\centering
\includegraphics[width=\columnwidth]{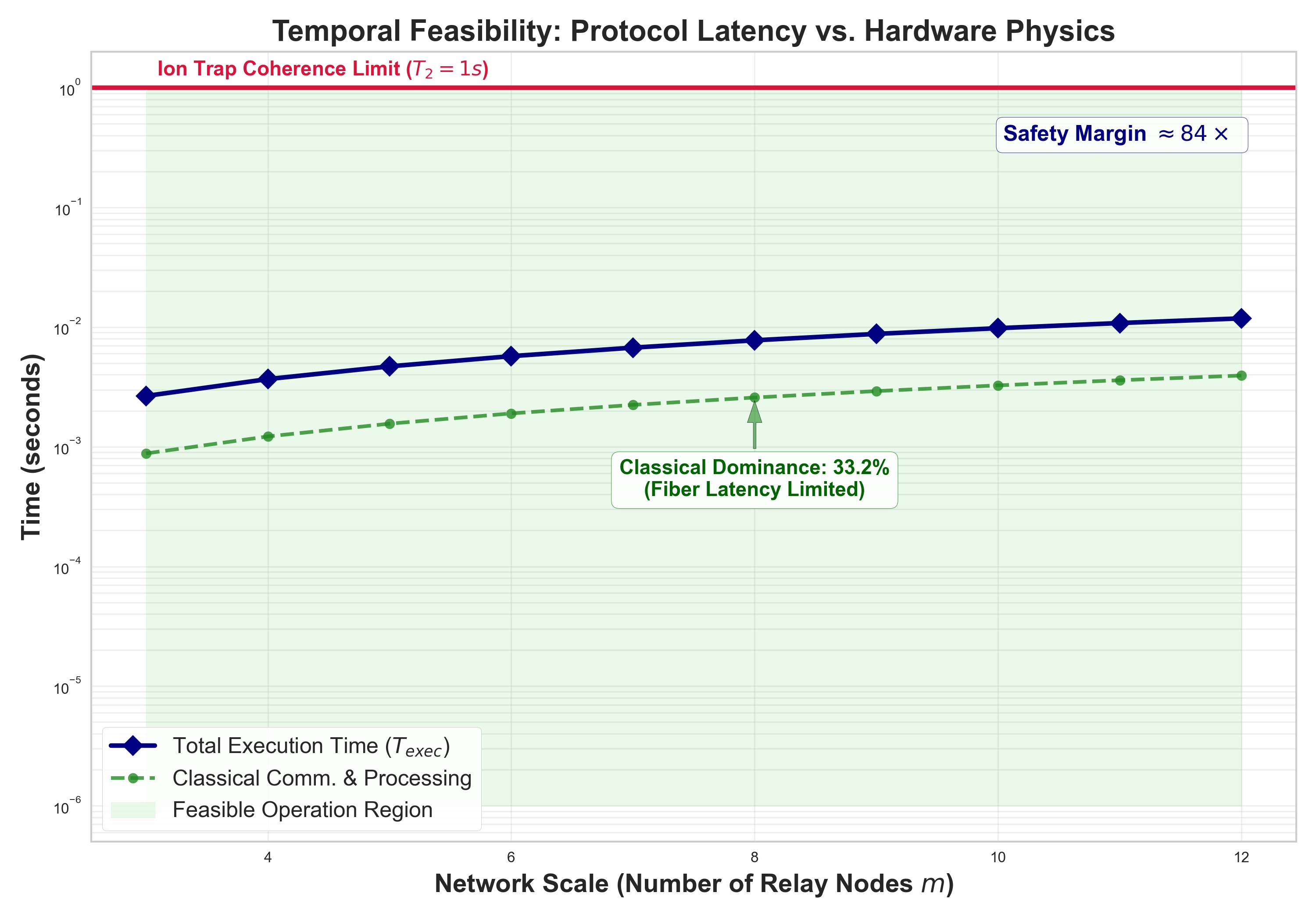}
\caption{\textbf{Temporal Feasibility.} The red solid line indicates the $1.0~\mathrm{s}$ coherence limit of $^{171}\mathrm{Yb}^{+}$ memory qubits. The dark blue solid line represents the total execution time of the protocol, and the green dashed line represents the latency caused by classical communication and processing. Simulation results show that the security margin is $\approx 80\times$, and classical communication time accounts for approximately 33\% of the total latency. This indicates that the protocol operates under low-latency conditions.}
\label{fig:temporal_feasibility}
\end{figure}

We use a conservative set of simulation parameters for evaluation: fiber propagation speed $c \approx 2 \times 10^5 ~ \mathrm{km/s}$ and processing latency $T_{\text{proc}}=50 ~ \mu \mathrm{s}$. Consistent with the two-species architecture defined in Section ~\ref{subsec:hardware}, we compare the total execution time to the coherence limit of $^{171}\mathrm{Yb}^{+}$ storage qubits $T_2^{\text{mem}} \approx 1.0 ~ \mathrm{s}$.
As shown in Figure ~\ref{fig:temporal_feasibility}, even for a large-scale network spanning 12 relay nodes, the total execution time of the protocol is still two orders of magnitude lower than the coherence limit, with a safety margin of $\approx 80 \times$.
Classical communication time accounts for approximately 33\% of the total latency, indicating that the protocol operates under low latency. This is due to the amortized $O(1)$ scalability of this protocol.
In traditional routing protocols using Pauli frames, nodes need to wait for global state synchronization information before performing recovery operations, thus increasing linearly with the network diameter and resulting in significant chain-like latency accumulation. Our protocol's amortized $O(1)$ complexity nature compresses the accumulated latency from potentially hundreds of milliseconds of global coordination waiting time to the local millisecond level, preventing the linear or non-linear accumulation of decoherence distortion with path length.

\subsection{Noise Analysis}
\label{subsec:noise_resilience}
In this subsection, we analyze the protol's performaocnce in noisy environments through simulation experiments. The simulation initially assumes high-fidelity predicted Bell pairs as initial resources, i.e., a purified state using $F_{\text{init}} \to 1$. This simplified setup allows us to focus on evaluating the impact of core error factors during protocol operation. These factors include quantum gate error $\mathcal{E}_{\text{gate}}$, quantum measurement operation error $\mathcal{E}_{\text{meas}}$, and the fidelity decrease caused by accumulated qubit decoherence. We construct the following hybrid model for the system's total fidelity $F_{\text{total}}$:

\begin{equation}
F_{\text{total}} \approx F_{\text{circuit}}(\mathcal{E}_{\text{gate}}, \mathcal{E}_{\text{meas}}) \times \exp\left(-\frac{T_{\text{net}}}{T_2^{\text{mem}}}\right)
\label{eq:hybrid_model}
\end{equation}

In this model, we employ a hybrid modeling approach combining stochastic circuit simulation and analytical network modeling. Here, $F_{\text{circuit}}$ represents the quantum gate error, the fidelity loss introduced by quantum measurement operation errors. The exponential term represents the qubit decoherence distortion introduced by the cumulative delay $T_{\text{net}}$ during protocol execution.
To ensure consistency in the analysis, this simulation uses the time delay parameters set by \ref{subsec:temporal_feasibility}. Notably we assume the probabilistic generation of entanglement is managed by the relay nodes as asynchronous buffers. Therefore, this model focuses on the evolution fidelity of entangled resources after a successful heralding event, without considering the attempt waiting time before link establishment.

\begin{figure*}[t] 
\centering
\includegraphics[width=0.95\textwidth]{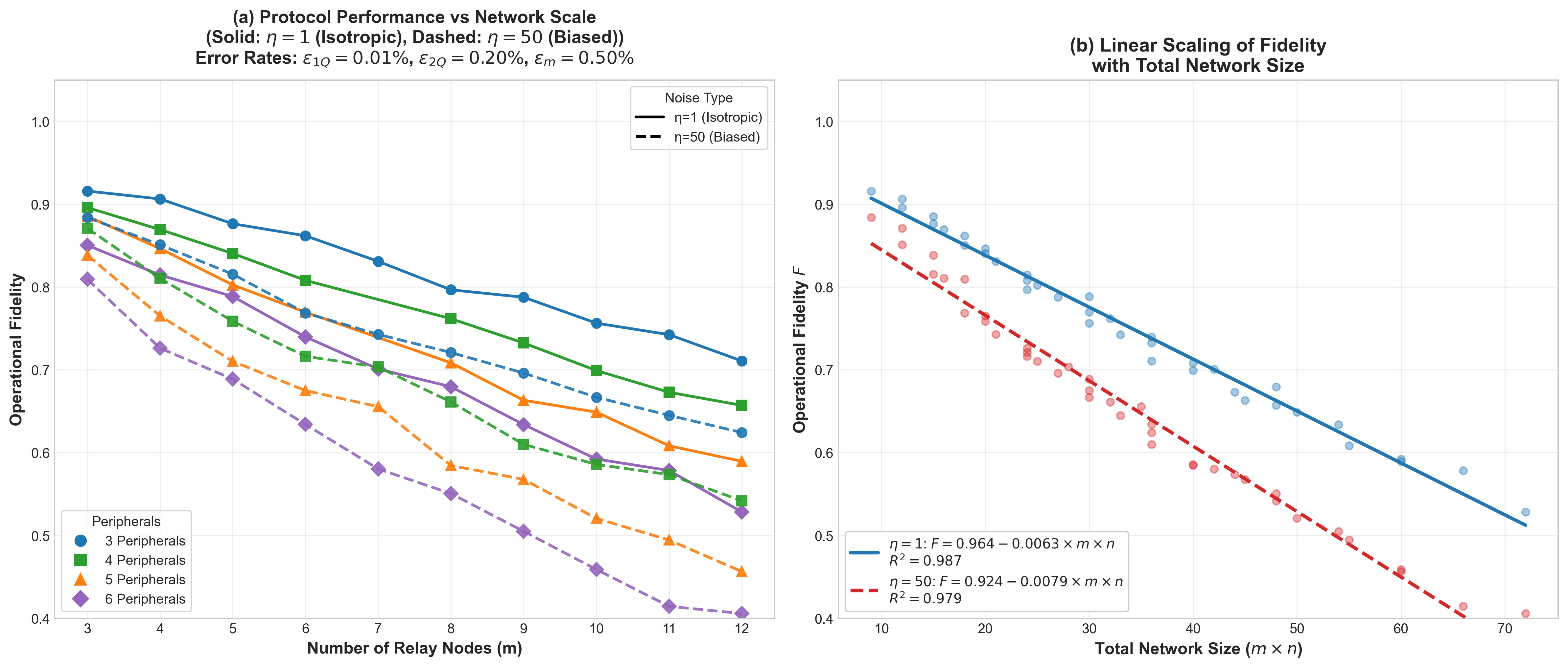}
\caption{\textbf{Fidelity scaling under isotropic vs. biased noise.} (a) Simulation results of entanglement fidelity evolution with network size ($m×n$). Under both the standard isotropic noise ($\eta=1$, solid line) and the experimentally motivated Z-bias noise ($\eta=50$, dashed line) models, the system fidelity shows a significant monotonic linear decline. (b) Linear regression ($R² \approx 0.98$) analysis confirms the additive nature of error accumulation. Although a strong Z-bias reduces the absolute fidelity intercept due to measurement sensitivity, the highly consistent linear expansion characteristics under both noise models demonstrate that this protocol has extremely strong performance predictability and structural scalability under different physical noise environments.}

\label{fig:noise_resilience}
\end{figure*}

\subsubsection{Anisotropic Error Model}

Unlike the traditional isotropic depolarization assumption, we model the noise characteristics unique to trapped ion architectures. The noise channel $\mathcal{E}$ is defined as an anisotropic Pauli channel:

\begin{equation}
\mathcal{E}(\rho) = (1 - p)\rho + \sum_{\sigma \in \{X,Y,Z\}} p_\sigma \sigma \rho \sigma,
\label{eq:biased_noise}
\end{equation}

The error probabilities satisfy $\sum p_\sigma = p$. To simulate the dominant role of magnetic field fluctuations and laser phase noise relative to spontaneous emission, we introduce a strong bias factor $\eta = p_Z / p_{X,Y} = 50$. Simulation parameters are based on current hardware benchmarks: single-qubit gate error $\epsilon_{1Q} = 0.01\%$, two-qubit gate (CZ) error $\epsilon_{2Q} = 0.2\%$, and measurement readout error $\epsilon_m = 0.5\%$.

\subsubsection{Error robustness and scalability analysis}

Simulation results illustrated in Figure~\ref{fig:noise_resilience} reveal the impact of noise structure on protocol performance. 
Notably, under the same total error rate, the absolute fidelity of the system is lower in a bias noise environment with $\eta=50$ than in an isotropic noise model with $\eta=1$. This bias noise sensitivity stems from the protocol's dependence on Pauli-X measurements. Essentially, since the Pauli $Z$ operator anti-commutes with the measurement basis, which is represented by the relation $Z|+\rangle = |-\rangle$, $Z$ errors directly manifest as logical bit-flips in the readout results.

However, regardless of the noise bias condition, the fidelity exhibits a highly consistent linear decreasing trend. Based on this observation, we propose a linear phenomenological model to describe the fidelity decay law as follows:

\begin{equation}
F_{m,n} \approx F_0 - \gamma \cdot mn.
\label{eq:linear_model}
\end{equation}

In this expression, parameter $F_0$ represents the baseline fidelity of the system at the current noise level, while parameter $\gamma$ represents the marginal fidelity loss rate introduced as the network size expands. The network size is defined as the product of the number of relays $m$ and the number of branches $n$. This model contains a basic assumption that error accumulation is a simple additive process and that no complex nonlocal error propagation occurs.

To verify the effectiveness of the model, we performed linear regression analysis on the simulation data. The fitting results show that the linear model exhibits extremely high explanatory power in both scenarios, with a goodness of regression of $R^2 \approx 0.98$. The specific fitting parameters are compared below. Under ideal isotropic noise ($\eta=1$), the model evolves to $F(m,n) \approx 0.964 -0.0063 \times (m \times n)$, while under actual biased noise ($\eta=50$), the model evolves to $F(m,n) \approx 0.924 - 0.0079 \times (m \times n)$.

A comparison of these two datasets reveals that a strong $Z$-bias not only reduces the initial baseline fidelity $F_0$ but also slightly accelerates the decay rate $\gamma$ as the network scale increases. This quantified linear correspondence confirms the predictability of the protocol's performance in large-scale network topologies.

\subsection{Scalability Limit Analysis}
\label{subsec:phase_diagram}

The analysis of temporal analysis in the subsection \ref{subsec:resource_efficiency} shows that in the physical platform mapped by our protocol, the protocol execution latency is much lower than the coherence time of the qubit, and the decoherence distortion has a limited impact on the overall fidelity. In contrast, at the current experimental level, the measurement error rate $\epsilon_m$ remains high, which is the main factor limiting the scaling of quantum networks. To explore this core bottleneck in depth, we use the measurement error rate $\epsilon_m$ as the independent variable and perform a parameter scan in the logarithmic scale $\epsilon_m \in [10^{-3}, 10^{-1}]$. In the simulation experiment, the remaining physical parameters were set as follows: the single-bit gate error rate was $\epsilon_{1Q} = 0.01\%$, the double-bit gate error rate was $\epsilon_{2Q} = 0.2\%$, the network size was set to the number of relay nodes $m \in [3, 12]$, and the number of terminal nodes connected to each relay node was fixed at $n=4$. Figure~\ref{fig:phase} illustrates two performance ranges with distinctly different physical meanings:

\begin{figure}[htbp]
\centering
\includegraphics[width=\columnwidth]{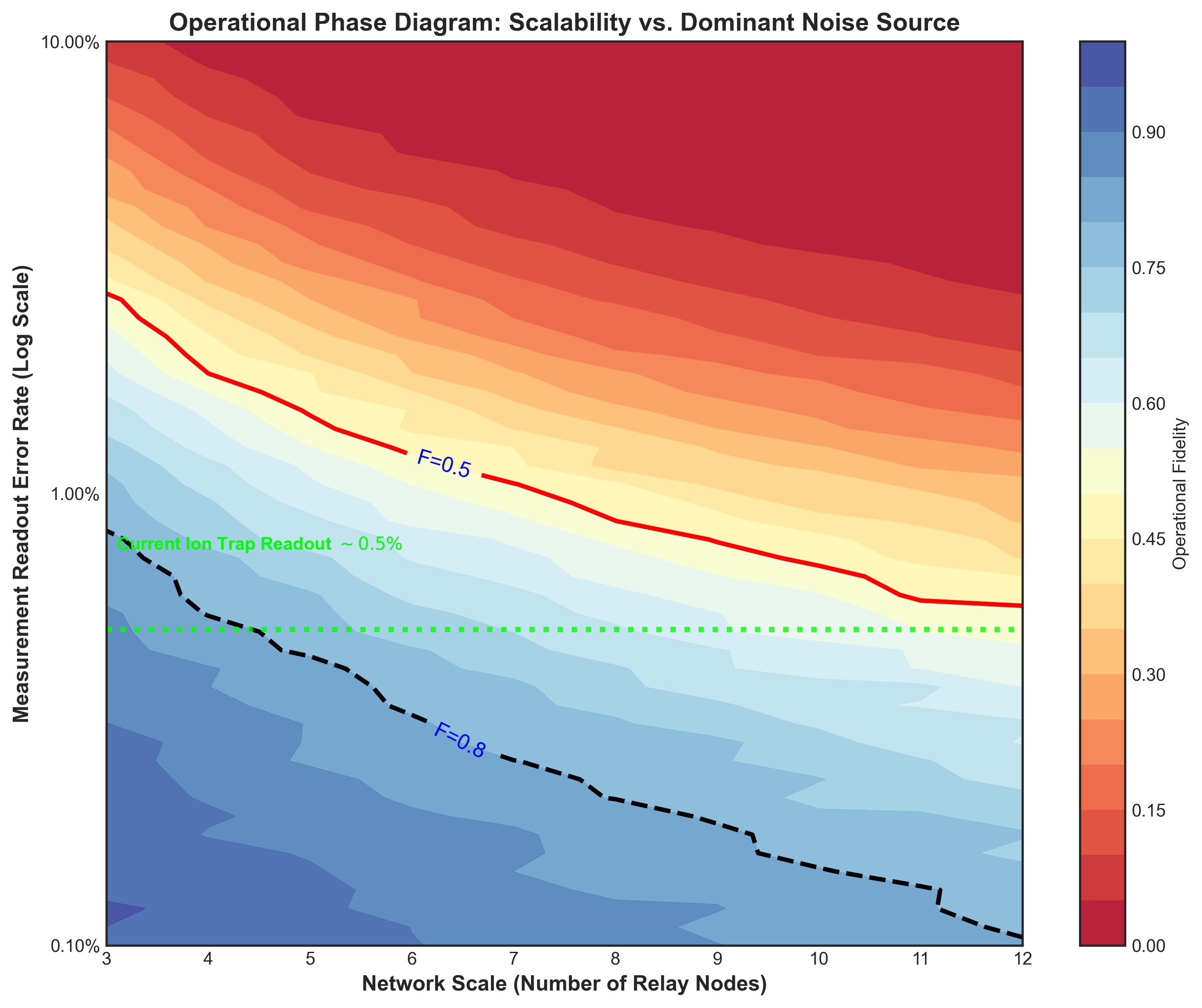}
\caption{\textbf{Operational Phase Diagram.} The relationship between global fidelity and network scale ($m$) and readout error ($\epsilon_m$) is illustrated in the figure. The contour line $F=0.5$ (red) defines the boundary where entanglement exists, while $F=0.8$ (black dashed line) determines the threshold for high-fidelity applications. The intersection of the current technical baseline ($\epsilon_m = 0.5\%$, green dashed line) and the survival boundary is located at $m \approx 12$. The linear characteristics of the isofidelity contour lines in the figure, rather than the exponential decay corresponding to the coherence limit, demonstrate that low-latency protocols have decoupled system performance from qubit decoherence.}
\label{fig:phase}
\end{figure}

\begin{itemize}
    \item \textbf{High-Fidelity Regime ($F>0.8$):} defines the high-fidelity entanglement required for logical qubit synthesis and fault-tolerant protocols. With current state-of-the-art readout technology ($\epsilon_m = 0.5\%$), this interval is limited to small clusters ($m \approx 4$). To scale the network to approximately $m=10$ nodes, the readout error needs to be kept below $0.15\%$.
    \item \textbf{GME Persistence Regime ($0.5 < F < 0.8$):} The solid red line in the figure marks the theoretical boundary for the existence of true many-body entanglement. When the readout error is $\epsilon_m = 0.5\%$, the protocol can still maintain effective entanglement in a wide area network with $m=12$ relay nodes.
    
\end{itemize}

Simulation results, through the linear slope of the isofidelity curve in a logarithmic coordinates, confirm the scaling relationship between the maximum network size $m_{\text{max}}$ and the negative logarithm of the measurement error rate $\epsilon_m$, which is $m_{\text{max}} \propto -\log \epsilon_m$. This finding quantitatively verifies the previous hypothesis that on the Trapped-ion ion trap platform mapped by our protocol, thanks to the ample time margin provided by the amortization of $O(1)$ complexity, the scalability bottleneck of the quantum network has successfully shifted from the constraint of qubits coherence time to quantum measurement accuracy. This means that we can focus on improving measurement fidelity, such as trying to combine fault-tolerant techniques like erasure conversion in our protocol, which is the subject of the next subsection \ref{sec:erasure}.

\subsection{Scalability via Erasure Conversion}
\label{sec:erasure}
Analysis of subsection \ref{subsec:phase_diagram} shows that measurement readout fidelity is the core bottleneck restricting the final graph state fidelity. In our protocol, the accuracy of Pauli-$X$ measurements directly determines the final quality of the globally entangled state. To address this challenge, we utilize dual-threshold fluorescence detection technology ~\cite{refera1, refera2} on a dual-species platform to transform hidden bit-flip errors into heralded erasure errors. 
Based on the aforementioned erasure conversion mechanism, and taking into account the symmetry and algebraic locality of the star graph state, we perform the following engineering optimizations on specific aspects of the Pauli-$X$ measurement in the three sub-protocols.

\begin{figure}[htbp]
\centering
\includegraphics[width=0.9\linewidth]{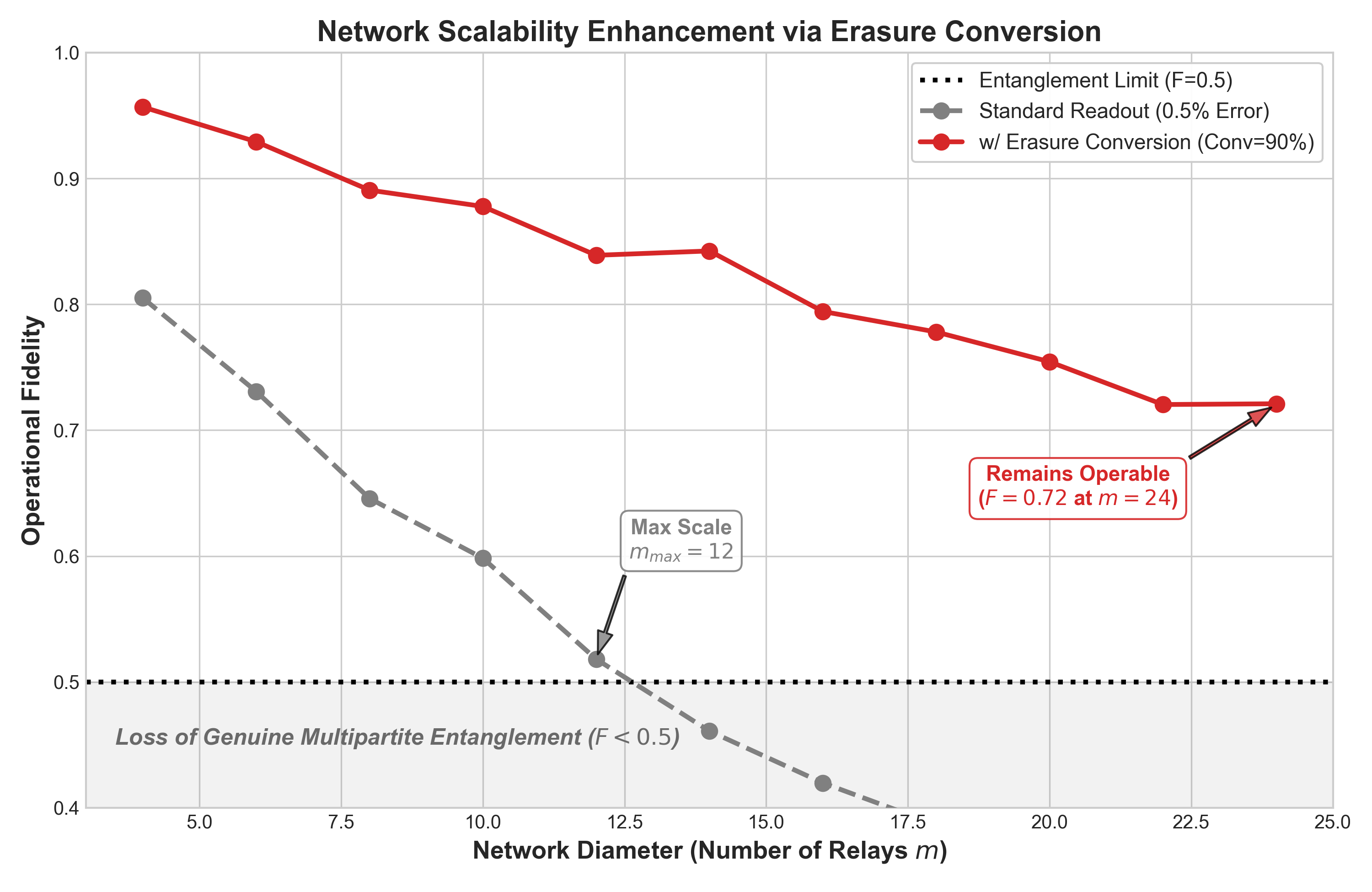} 
\caption{\textbf{Network Scalability Enhancement through Erasure Conversion}
The figure compares the evolution trend of system fidelity with scale under the settings of $n=4$ and $\epsilon_m = 0.5\%$. The gray dashed line and the red solid line correspond to the standard readout scheme and the erase-conversion scheme with a conversion efficiency of $\eta_{\text{conv}}=90\%$, respectively. The results show that the erase-conversion technique can reduce the effective error rate by an order of magnitude, thus enabling the system to maintain a high fidelity $F > 0.7$ even after crossing 24 relay nodes.}
\label{fig:erasure_extension}
\end{figure}

During the execution of Protocols~\ref{protocol:star_gen} and~\ref{protocol:star_fusion}, we introduce a local retry strategy. When a relay node detects a Pauli-$X$ measurement triggering an erase error warning, it immediately invokes pre-deployed spatial reuse resources to locally retry the $CZ$ gate and measurement operations on the new Bell pair. This mechanism ensures that each logical link obtains highly deterministic Pauli-$X$ measurement results, thereby improving the fidelity of the final graph.
For Protocol~\ref{protocol:center_transform}, we optimized the original measurement-based scheme into a logic gate-based scheme. By applying local Hadamard gates to the original center node and its target neighboring nodes, the transfer of the center node can still be achieved. This higher-precision quantum gate operation replaces the relatively low-precision measurement operation, avoiding error propagation caused by measurement errors. The auxiliary particles located on the relay nodes that are not measured do not need to be measured immediately; instead, they are retained as logical leaf nodes until the end of the process, and finally stripped through a Pauli-$Z$ measurement. Since in the star diagram, the $Z$ measurement error of the leaf node only produces a local phase effect and does not cause the collapse of the global entanglement structure, this strategy greatly reduces the substantial interference of measurement noise on the quality of the entangled state between end users.

We evaluate the performance improvement of the above optimization strategy by introducing an erasure transition efficiency $\eta_{\text{conv}}$. In this model, the effective logical error rate is suppressed to $\epsilon_{\text{eff}} = \epsilon_{m} \times (1 - \eta_{\text{conv}})$. Under this modeling, the reduction in the effective logical error rate $\epsilon_{\text{eff}}$ comes at the cost of consuming more bell states. It should be noted that the entanglement cost derived in Sec.~\ref{subsec:resource_efficiency} represents the theoretical lower bound of the resources required to construct the target topology. After introducing the erasure transition, the actual physical Bell pair consumption will be affected by the success rate. However, since the current measured error rate is at a low level $\epsilon_m \approx 0.5\%$, the resulting additional retry overhead is only a marginal burden and will not change the asymptotic trend of resource costs as the network size scales.

As shown in Figure ~\ref{fig:erasure_extension}, with a configuration of $n=4$ terminals per relay, the erasure transition reduces the logical error rate by an order of magnitude. Even when the network diameter expands to $m = 24$ relay nodes, the global fidelity remains above $F = 0.7$.
This design is essentially a trade-off between resources and quality: by consuming some redundant resources to offset erasure losses, it filters out hidden logic errors. This makes the network deliver not deterministic but noisy outputs, but high-quality heralded resources. In fact, this high-quality probabilistic resource is actually more attractive because fault-tolerant and error-correcting frameworks are much more efficient at handling "explicit losses" than "unknown flips".

\section{Comparison With Related Work}
\label{sec:comparison}

To quantitatively evaluate the overall performance of this protocol, we conducted a rigorous comparison with three cutting-edge quantum routing protocols: the hypergraph-optimized rate distribution by Fan et al.~\cite{refFan}, the multipath fidelity routing by Sutcliffe et al.~\cite{refevan}, and the gate-reduction schemes (CKL) by Chelluri et al.~\cite{refCKL}. A quantitative summary of these protocols is provided in Table~\ref{tab:comparison}.

\begin{table*}[t]   
\caption{Architectural comparison against state-of-the-art benchmarks.}
\label{tab:comparison}
\centering
\begin{tabular}{lcccc}
\toprule
\textbf{Feature} & \textbf{Our Work} & \textbf{Fan et al.}~\cite{refFan} & \textbf{CKL et al.}~\cite{refCKL} & \textbf{Sutcliffe}~\cite{refevan} \\
\midrule
Optimization Goal & Latency \& Isolation & Generation Rate & Gate Count & Multipath Reliability \\
Decision Complexity & $O(1)$ (Local Logic) & NP-Hard (LP) & $O(V+E)$ (BFS) & $O(V|E|^2)$ (Max-Flow) \\
Control Mode & Local Collapse & Global Sync & Global Pauli Tracking & Link-State Sync \\
Hardware Model & Heterogeneous & Homogeneous & Homogeneous & Homogeneous \\
Entanglement Cost & $1+1/n \approx 1$ & LP-Dependent & $(N-1)/N \approx 1$ & $\gg 1$ (Multipath) \\
\bottomrule
\end{tabular}
\end{table*}

\subsection{From Global Optimization to Local Collapse}

In dynamic quantum networks, the real-time speed of routing decisions depends mainly on computational efficiency. The multi-path scheme proposed by Sutcliffe et al.~\cite{refevan} uses min-cost max-flow algorithms; however, its polynomial complexity, $O(V|E|^2)$, has difficulty maintaining fast response times in large-scale topologies. Similarly, while the hypergraph flow optimization framework by Fan et al.~\cite{refFan} looks for optimal generation rates, the number of hyperedges in its models grows as $O(n^3 m^4)$. In large networks, such computational delays lead directly to decoherence because quantum states must wait for routing decisions. The CKL scheme~\cite{refCKL} addresses this by using Breadth-First Search (BFS) to make path-finding linear ($O(V+E)$), successfully reaching the theoretical resource lower bound of $(N-1)/N$. However, the fusion steps in CKL still require global Pauli frame tracking. Correction operators must be collected along long chains and broadcast to the whole network, causing high synchronization delays and heavy signaling loads in wide-area networks.

Our protocol adopts a completely different technical approach. Based on the algebraic properties revealed by the theorem ~\ref{thm:x-simplify}, we utilize the neighborhood independent set condition to allow the secondary operator caused by the measurement to collapse directly locally. This makes routing decisions no longer need to be aware of the entire network topology, but degenerates into a local operation that only requires neighborhood information. This protocol reduces the decision complexity to an amortized $O(1)$ level. Therefore, no matter how the network scales, the decision process can be completed in a very short time. This extremely low latency ensures that all classical signals can be delivered before the qubits decoherent ($T_{\text{calc}} \ll T_2$), thus avoiding the superlinear expansion problem common in traditional routing protocols from the underlying architecture.
Note that while the sequential topological expansion of the entire network globally requires $O(N)$ operations, our protocol guarantees that for any individual qubit, the local decision latency and the classical feedforward depth are strictly bounded by $O(1)$. This decouples the quantum memory waiting time from the network diameter.

\subsection{Hardware-Aware Design and Engineering Robustness}
Current quantum routing protocols~\cite{refFan, refevan, refCKL} generally assume that network nodes are functionally homogeneous. That means each node is responsible for both storing entangled states and handling frequent link measurements. However, in practical hardware architectures, such as trapped-ion systems, this design leads to severe physical conflicts, and noise generated by measurement operations can easily destroy stored bits within the same node through photon scattering. 
Our proposed protocol, through a heterogeneous logical role separation design, localizes measurement operations at relay nodes, avoiding the impact of measurement scattering noise on the stored qubits of terminal nodes.

Meanwhile, this design allows the protocol to perform high-frequency local retries without worrying about additional noise interference to terminal data during the retries. Therefore, combined with the localized characteristics of star topology, the protocol can incorporate erasure conversion techniques, effectively improving the fidelity of the protocol in noisy environments. This heterogeneous design inherently incorporates error suppression, demonstrating the protocol's excellent engineering robustness and providing a practical technical path for realizing high-fidelity quantum networks.

\section{Conclusion and Outlook}
\label{sec:conclusion}
This paper addresses the contradiction between probabilistic long-range entanglement generation and finite coherence time in large-scale quantum networks. Based on local measurement and classical feedforward, a routing protocol with amortized $O(1)$ decision complexity is proposed.

We map the protocol onto a dual-species Trapped-ion platform and perform simulation analysis. In terms of resource overhead, the bounded convergence of entanglement cost and gate density verifies the asymptotic scalability of the protocol. Regarding time feasibility, due to the amortized $O(1)$ scalability of this protocol, the safety margin of the total execution time reaches $\text{Margin} \approx 80\times$, and classical communication time accounts for approximately 33\% of the total latency, indicating that the protocol operates under low latency. Noise analysis further reveals that even under strong $Z$-bias noise, the fidelity expansion of the protocol always follows a linear scaling law. Furthermore, on the physical platform defined in this paper, since the $O(1)$ control logic provides sufficient time safety margin, the upper limit of quantum network expansion is no longer limited by the coherence time of the qubits, but mainly depends on the accuracy of a single physical measurement. To mitigate this measurment readout bottleneck, we leverage the branch independence of a star topology, transforming bit-flip errors into predictive erasure through an erasure transformation technique, and combining this with spatial multiplexing for in-situ retries. At a transformation efficiency of 90\%, this scheme reduces the effective logic error rate by an order of magnitude. This engineering strategy of "trading resources for fidelity" provides an important reference for the protocol design of future quantum internets.

Looking ahead, we will further explore extending this protocol to measurement-based or fusion-based quantum computing tasks, investigating its application potential in building scalable, fault-tolerant distributed quantum computing networks.

\section*{Data Availability}
The Qiskit simulation scripts and numerical data that support the findings of this study are openly available at: \url{https://github.com/ky87zheng/Quantum-Routing-Simulation}.

\bibliographystyle{unsrt} 
\bibliography{references}






\appendix

\section{Graph State Properties and Proofs}
\label{app:graph state}

\subsection*{A.1 Graph State Definition}
A graph state $|G\rangle$ is generated by preparing each vertex in the $|+\rangle$ state and applying controlled-Z ($\CZ$) gates along the edges of an undirected graph $G = (V,E)$, where $V$ represents qubits and $E$ denotes entangling connections:

\begin{equation}
|G\rangle = \prod_{(a,b)\in E} \CZ_{a,b} |+\rangle^{\otimes |V|},\quad 
|+\rangle = \frac{1}{\sqrt{2}}(|0\rangle + |1\rangle)
\label{eq:A1}
\end{equation}

\subsection*{A.2 Local Complementation}
Local complementation at vertex $i$, denoted $\text{LC}_i(G)$, modifies the graph topology through symmetric difference of edge sets:
\begin{equation}
G \mapsto (V, E \triangle \mathcal{E}_i), \quad 
\mathcal{E}_i = \{ (u,v) \mid u,v \in \mathcal{N}(i), u \neq v \}
\label{eq:lc_graph}
\end{equation}
This graph transformation corresponds to the unitary operation:
\begin{equation}
U_{\text{LC}_i} = R_x(\pi/2)_i \bigotimes_{j \in \mathcal{N}(i)} R_z(-\pi/2)_j
\label{eq:lc_unitary}
\end{equation}
where the rotation operators are defined as $R_{x,z}(\theta) = \exp(-i\theta \sigma_{x,z}/2)$.

Fig.~\ref{fig:8} demonstrates the graph transformation induced by local complementation (LC) operation. For instance, consider vertex $1$ with neighbor set $\mathcal{N}(1) = \{2,3,4\}$ and initial edge set $E = \{(1,2),(1,3),(1,4),(2,3),(2,4)\}$. The LC operation at vertex $1$ induces the transformation:
\begin{equation}
\LC_1(G): \quad E \mapsto E \triangle \{(3,4)\} = \{(1,2),(1,3),(1,4),(3,4)\}
\end{equation}

\begin{figure}[htbp]
    \centering
    \includegraphics[width=\columnwidth]{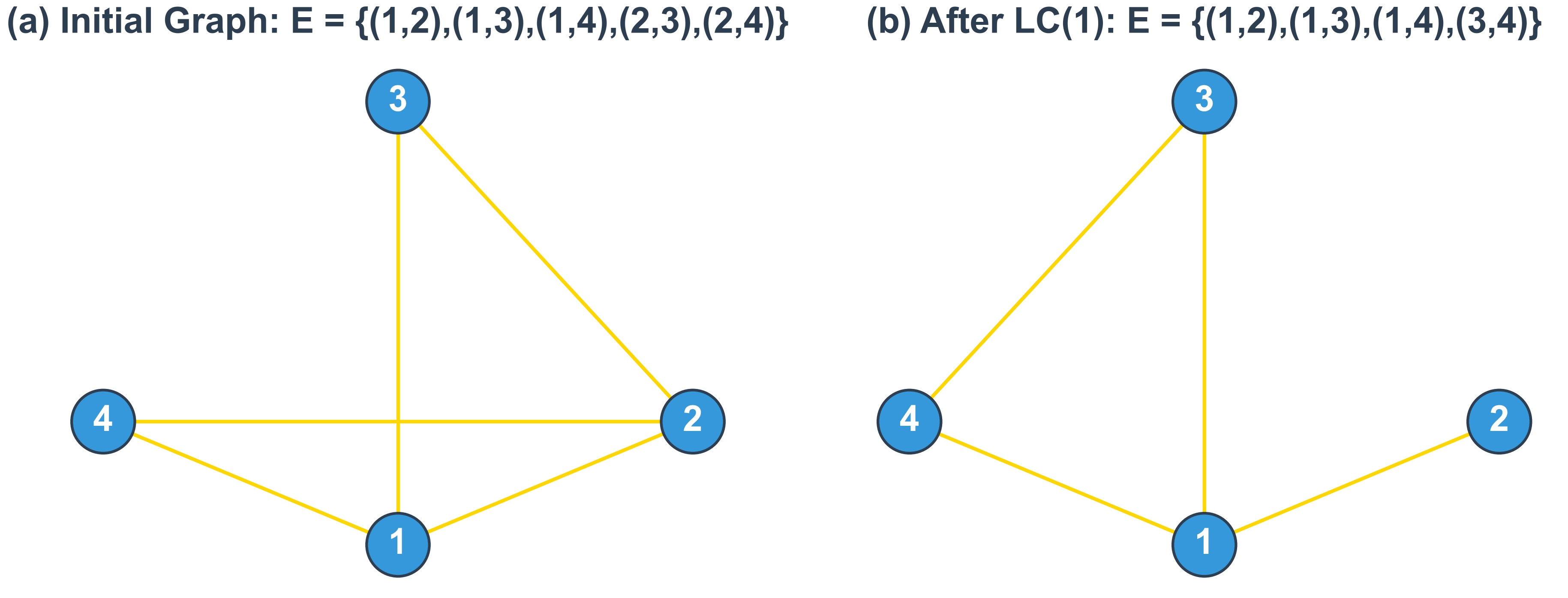}
    \caption{Graph transformation demonstration of Local Complementation operation.}
    \label{fig:8}
\end{figure}

This graph operation corresponds to the unitary evolution:
\begin{equation}
\LC_1|G\rangle = \left[ R_x(\tfrac{\pi}{2})_1 \bigotimes_{j=2}^4 R_z(-\tfrac{\pi}{2})_j \right] |G\rangle
\label{eq:lc_unitary1}
\end{equation}

\subsection*{A.3 Pauli Measurement Rules}

\subsubsection*{A.3.1 Z-basis Measurement}
Measurement of qubit $i$ in the Z-basis $\{|0\rangle, |1\rangle\}$ projects the graph state onto:
\begin{equation}
M_z^{(i)} |G\rangle = \frac{1}{\sqrt{2}} \left( |0\rangle_i \otimes U_0^{(i)} |G_{-i}\rangle + |1\rangle_i \otimes U_1^{(i)} |G_{-i}\rangle \right)
\label{eq:A4}
\end{equation}
where $|G_{-i}\rangle$ denotes the graph state after removing vertex $i$, and the measurement-induced corrections are given by $U_0^{(i)} = I^{\otimes|\mathcal{N}(i)|}$ and $U_1^{(i)} = Z^{\otimes|\mathcal{N}(i)|}$. The post-measurement state differs from the standard graph state $|G_{-i}\rangle$ by a recovery operation applied to the neighborhood $\mathcal{N}(i)$.

\subsubsection*{A.3.2 Y-basis Measurement}
Measurement of qubit $i$ in the Y-basis $\{|+y\rangle, |-y\rangle\}$ where $|\pm y\rangle = \frac{|0\rangle \pm i|1\rangle}{\sqrt{2}}$ results in:
\begin{equation}
M_y^{(i)} |G\rangle = \frac{1}{\sqrt{2}} \left( |+y\rangle_i \otimes U_{+y}^{(i)} |(\LC_i(G))_{-i}\rangle + |-y\rangle_i \otimes U_{-y}^{(i)} |(\LC_i(G))_{-i}\rangle \right)
\label{eq:A6}
\end{equation}
where $|(\LC_i(G))_{-i}\rangle$ represents the graph state after local complementation at vertex $i$ followed by vertex removal. The measurement-induced corrections are given by:
\begin{equation}
U_{\pm y}^{(i)} = R_z(\pm\pi/2)^{\otimes|\mathcal{N}(i)|}
\label{eq:A7}
\end{equation}
This corresponds to a coherent superposition of two graph modifications: the measured vertex collapses to a Y-basis state, while its neighborhood acquires conditional phase rotations correlated with the measurement outcome.

\subsubsection*{A.3.3 X-basis Measurement}
Measurement of qubit $i$ in the X-basis $\{|+\rangle, |-\rangle\}$ where $|\pm\rangle = \frac{|0\rangle \pm |1\rangle}{\sqrt{2}}$ gives:
\begin{equation}
M_x^{(i)} |G\rangle = \frac{1}{\sqrt{2}} \left( |+x\rangle_i \otimes U_+^{(i,j)} |G''_{-i}\rangle + |-x\rangle_i \otimes U_-^{(i,j)} |G''_{-i}\rangle \right),
\label{eq:A8}
\end{equation}
where $|G''_{-i}\rangle = |\LC_j(\LC_i(\LC_j(G)))_{-i}\rangle$ is the resultant graph state after the sequence of local complementations and vertex removal. The measurement-induced corrections for measurement outcomes $\pm$ are given as follows:
\begin{equation}
U_\pm^{(i,j)} = R_y(\mp\pi/2)_j \otimes Z^{\otimes |S_\pm|},
\label{eq:A9}
\end{equation}
where $j \in \mathcal{N}(i)$ is a selected reference neighbor, and the sets are defined as $S_+ = \mathcal{N}(i) \setminus (\mathcal{N}(j) \cup \{j\})$ and $S_- = \mathcal{N}(j) \setminus (\mathcal{N}(i) \cup \{i\})$.

Given the prominence of X-basis measurements in our analysis, we derive essential Clifford group equivalences for specific rotations. Direct computation from the matrix representations yields the following operator identities:
\begin{equation}
R_y(-\pi/2) \equiv ZH, \quad R_y(\pi/2) \equiv HZ
\label{eq:ry_equiv}
\end{equation}
with $H$ denoting the Hadamard gate. 

The state produced by a Pauli $X$ measurement is not identical to the target graph $|G''_{-i}\rangle = |\LC_j(\LC_i(\LC_j(G)))_{-i}\rangle$. However, they can be converted to each other through local operations. To return to the canonical graph state $|G''_{-i}\rangle$ after measurement, we apply a recovery operation defined by the inverse correction operator:
\begin{equation}
\big( U_\pm^{(i,j)} \big)^{-1} 
= R_y(\pm\pi/2)_j \otimes Z^{\otimes |S_\pm|}
= 
\begin{cases} 
(HZ)_j \otimes Z^{\otimes |S_+|} & \text{(} |+x\rangle_i) \\ 
(ZH)_j \otimes Z^{\otimes |S_-|} & \text{(} |-x\rangle_i)
\end{cases}
\label{eq:recovery_clifford}
\end{equation}

This follows from the basic properties $\big[ R_y(\theta) \big]^{-1} = R_y(-\theta)$ and $Z^{-1} = Z$.

It is noteworthy that studying the evolution of the corresponding pattern $|G''_{-i}\rangle = |\LC_j(\LC_i(\LC_j(G)))_{-i}\rangle$ for a specific initial pattern, a Pauli X measurement of a specific qubit, and a suitable selected reference qubit is of significant value. Figures~\ref{fig:9}--\ref{fig:11} illustrate three examples representing the pattern evolution in the three sub-protocols of our protocol. These examples demonstrate that the desired pattern can be obtained by selecting appropriate measurement and reference qubits.

\begin{figure}[htbp]
    \centering
    \includegraphics[width=\columnwidth]{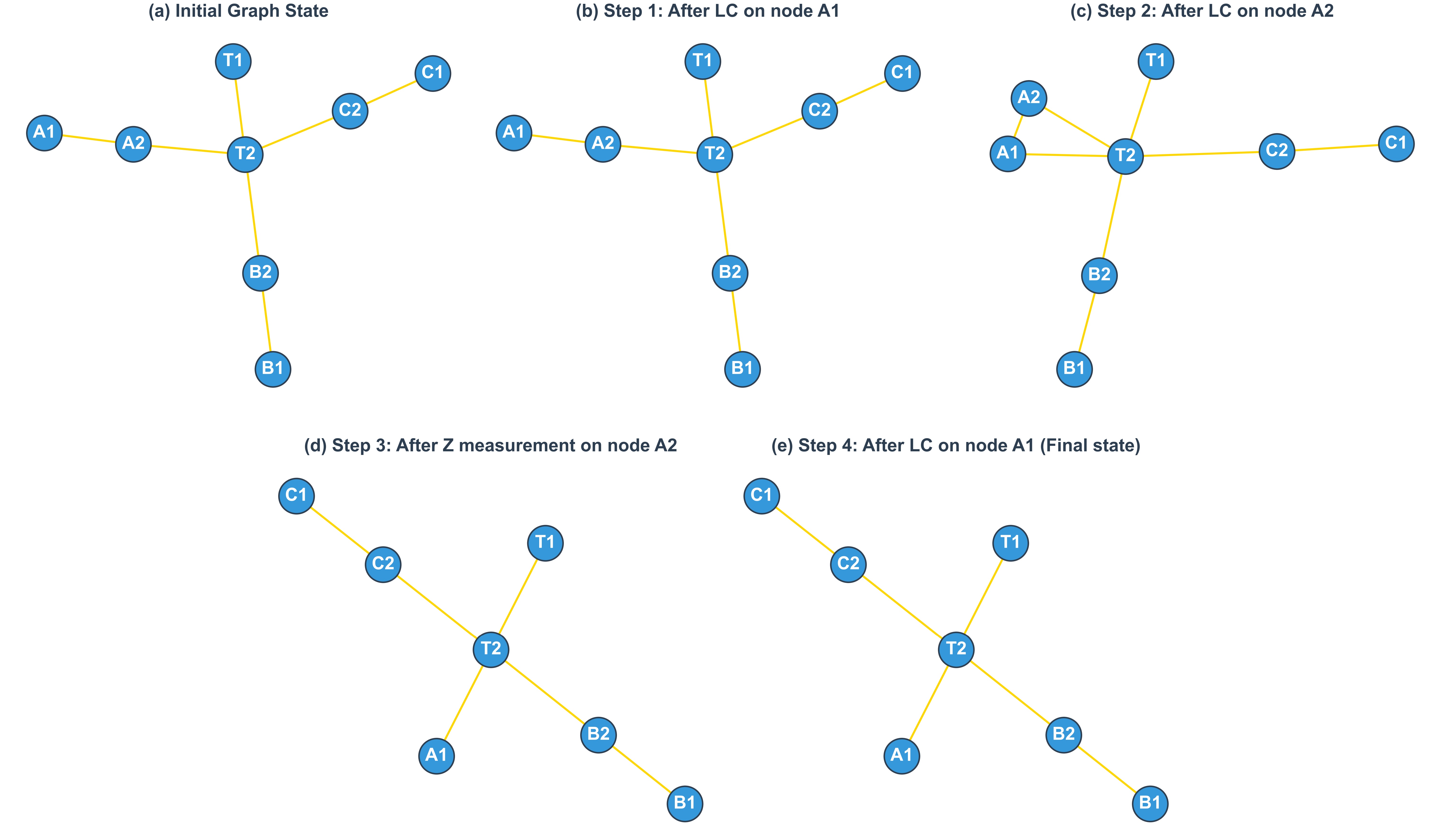}
    \caption{An $X$-basis measurement is performed on node $A_2$ using $A_1$ as the reference neighbor. The plot shows the complete evolution from the initial to the final graph state. This process extends through continuous measurements on nodes $B_2$ and $C_2$. By selecting $B_1$ and $C_1$ as the respective reference neighbors, the system ultimately yields a star graph centered at node $T_2$.}
    \label{fig:9}
\end{figure}

\begin{figure}[htbp]
    \centering
    \includegraphics[width=\columnwidth]{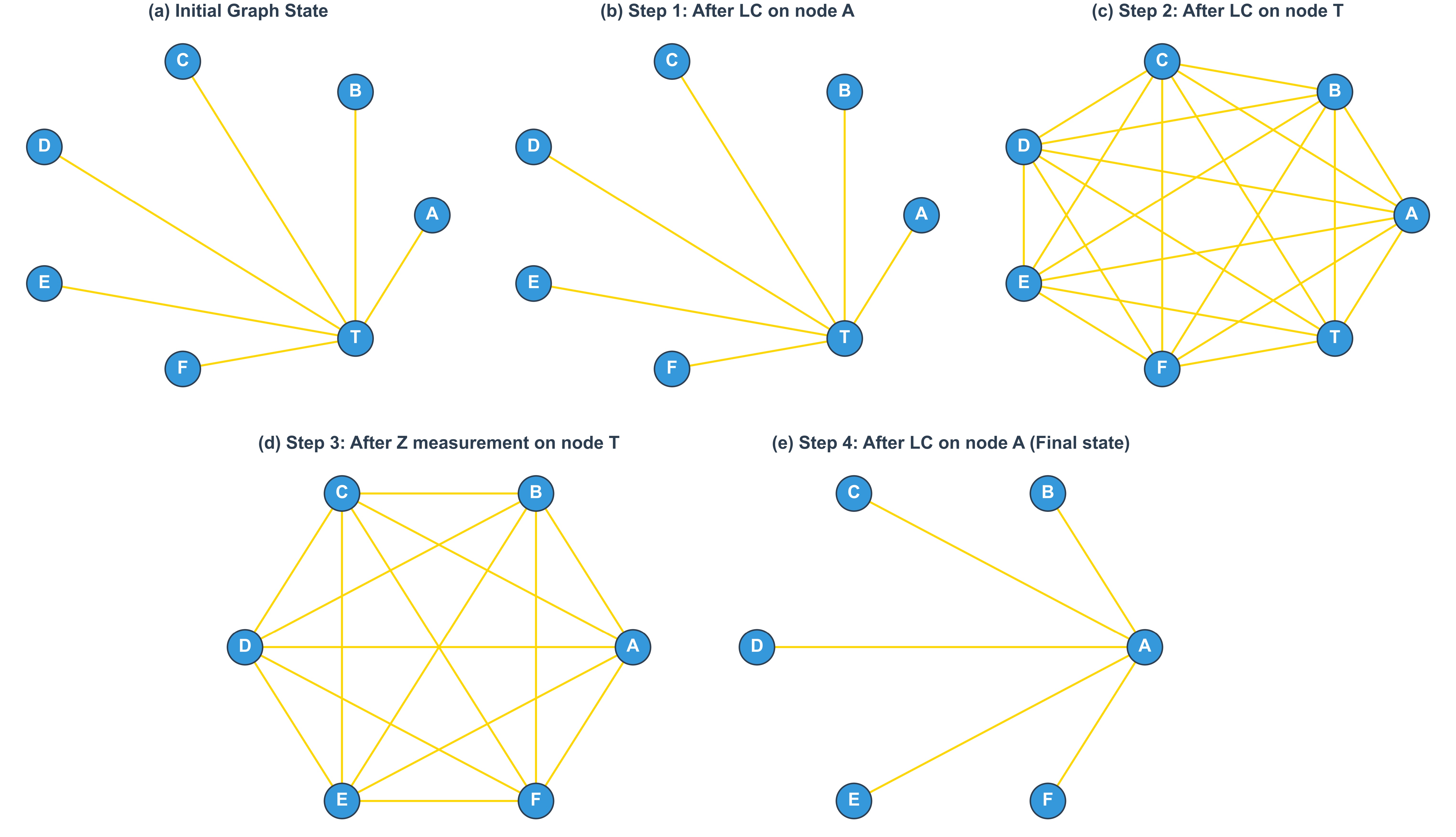}
    \caption{An $X$-basis measurement is applied to node $T$ with $A$ as the reference neighbor. This sequence highlights the evolution of the star topology during measurement operations. The final result is a new star graph centered at node $A$.}
    \label{fig:10}
\end{figure}

\begin{figure}[htbp]
    \centering
    \includegraphics[width=\columnwidth]{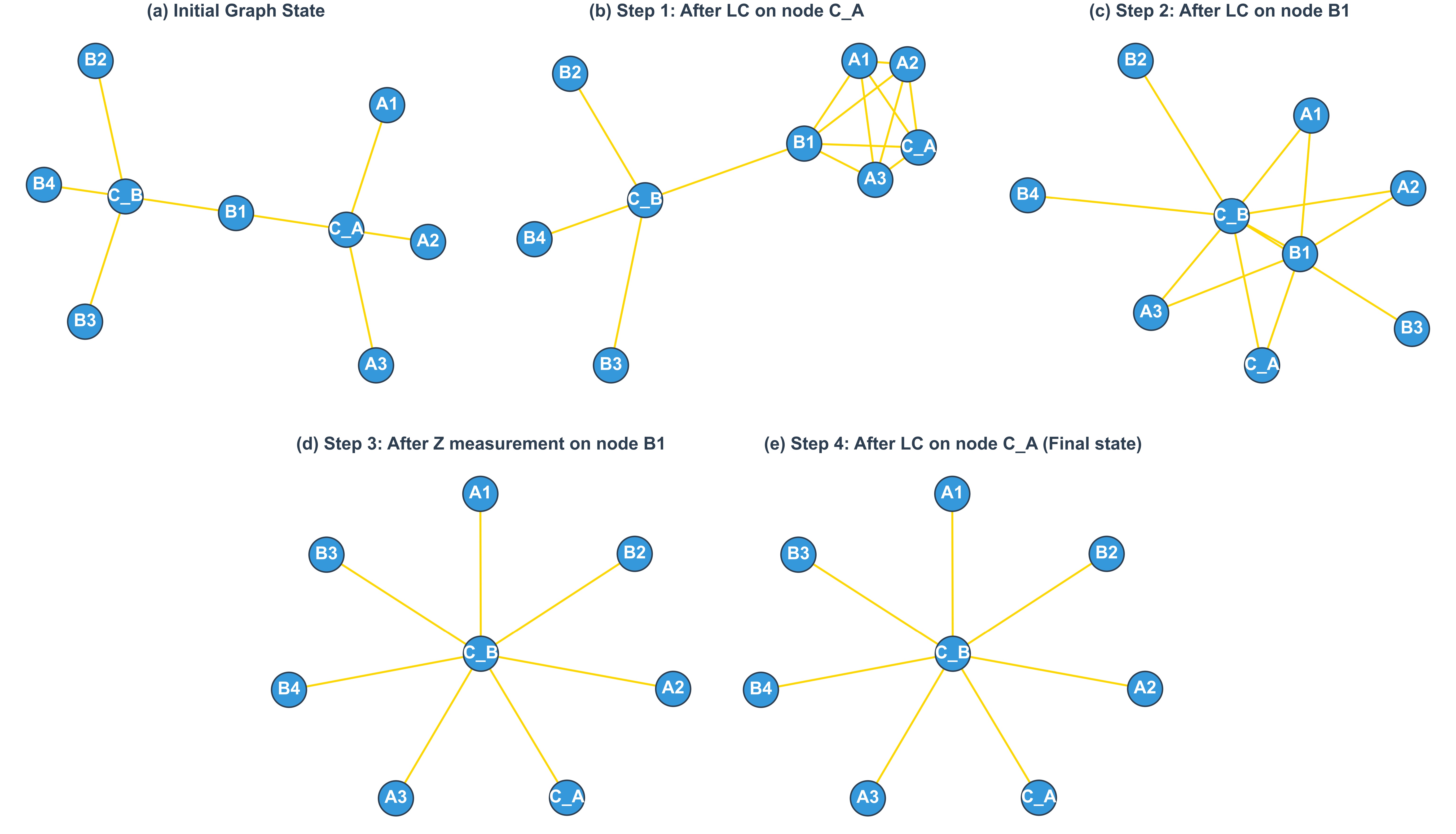}
    \caption{An $X$-basis measurement is performed on node $B_1$ with $C_A$ as the reference neighbor. This step demonstrates the transformation mechanism within complex topological structures. The measurement-induced reconstruction effectively reshapes the graph. A star graph state centered at node $C_B$ is ultimately obtained.}
    \label{fig:11}
\end{figure}

Although Eq.~\eqref{eq:recovery_clifford} indicates that the recovery process involves complex multi-qubit corrections, under certain symmetries, by strategically choosing auxiliary qubit, we can find equivalent operators that significantly reduce the number of gate operations.

Specifically, for the case where the Pauli $X$ measurement at vertex $i$ is $|+x\rangle_i$, the recovery operation exhibits significant mathematical simplification. We demonstrate this through the following logic. \textbf{Lemma 1} proves that when the neighborhood $\mathcal{N}(i)$ is an independent set, the measurement produces a local star topology centered at $j$. \textbf{Lemma 2} derives a key operator identity for the star graph: $Z_cH_c\bigotimes_{k\in L}Z_k = H_c$. Combining these conclusions, \textbf{Theorem 1} proves that, given the neighborhood $\mathcal{N}(i)$ is an independent set, the recovery operation can be simplified to a single Hadamard gate $H_j$.

For the case where the Pauli $X$ measurement result is $|-x\rangle_i$, by selecting appropriate auxiliary qubits, the qubit set that would otherwise require multi-qubit bit correction can be made empty. This also effectively simplifies the corresponding recovery operation.

\noindent \textbf{Lemma \ref{lem:star-formation} (Local Star Topology Formation).} 
Given a graph $G = (V, E)$ where the neighborhood $\mathcal{N}(i)$ of vertex $i$ is an independent set. When qubit $i$ undergoes a Pauli-$X$ measurement, the graph state evolves according to the following transformation sequence. This sequence captures the structural changes at the graph representation. For any selected neighbor $j \in \mathcal{N}(i)$, the graph topology updates as follows:
\begin{align*}
G_1 &= \mathrm{LC}_j(G), \nonumber\\
G_2 &= \mathrm{LC}_i(G_1), \nonumber\\
G_2^{-i} &= G_2 \setminus \{i\}, \nonumber\\
G'' &= \mathrm{LC}_j(G_2^{-i}).
\end{align*}
This transformation sequence generates a localized star topology in the final graph $G''$. This star graph is derived from the vertex set $\{j\} \cup S$, where the set $S = \mathcal{N}(i) \setminus \{j\}$. In this structure, $j$ is the node center, and $S$ is its surrounding leaf node.
Note that the rest of the graph $G''$ retains its original arbitrarily complex topology.

\begin{proof}
We trace the evolution of edge relationships between arbitrary distinct vertices $u, v \in S$ throughout the transformation. Initially, the independence of $\mathcal{N}(i)$ ensures $\{u,v\} \notin E(G)$. Following $\mathrm{LC}_j(G)$, the connection remains unchanged ($\{u,v\} \notin E(G_1)$) because no additional edges exist between $S$ and $j$ beyond the essential $\{i,j\}$ linkage. The subsequent $\mathrm{LC}_i$ operation introduces the missing $u$-$v$ edge ($\{u,v\} \in E(G_2)$) since both vertices neighbor $i$. This created edge persists after vertex removal ($\{u,v\} \in E(G_2^{-i})$). The concluding $\mathrm{LC}_j$ transformation then precisely removes the $u$-$v$ connection ($\{u,v\} \notin E(G'')$) because $u$ and $v$ now reside in $\mathcal{N}_j(G_2^{-i})$.

Simultaneously, the connections between $j$ and each $u \in S$ are preserved throughout all transformations. Therefore, in the subgraph of $G''$ induced by $\{j\} \cup S$, vertex $j$ is connected to all vertices in $S$, while $S$ itself remains an independent set—forming a perfect star topology.

This proof only concerns the subgraph induced by $\{j\} \cup S$. The remainder of $G''$ may have arbitrary structure unaffected by our analysis.
\end{proof}

\noindent \textbf{Lemma \ref{lem:star-phase} (Operator Reduction in Star Topology).} 
For a star graph state $|G\rangle$ with center $c$ and leaves $L$ (where $|L| \geq 1$), the following operator equivalence exists:
\begin{equation}
Z_c H_c \bigotimes_{k \in L} Z_k |G\rangle = H_c |G\rangle
\end{equation}
This identity demonstrates how multi-qubit $Z$ operations coordinated through star entanglement collapse onto a single Hadamard transformation. This equivalence holds universally for any number of leaves, including the minimal cases:
\begin{itemize}
    \item \textbf{Single leaf} ($|L|=1$): Reduces to basic two-qubit entanglement
    \item \textbf{Two leaves} ($|L|=2$): Forms a three-qubit linear chain
    \item \textbf{Multiple leaves} ($|L|\geq3$): Standard star configuration
\end{itemize}

\begin{proof}
Leveraging the stabilizer formalism, where $K_c = X_c \prod_{k \in L} Z_k$ and $K_k = Z_c X_k$ for each leaf $k \in L$, we derive the operator equivalence through quantum algebraic manipulation:
\begin{align*}
Z_c H_c \bigotimes_{k \in L} Z_k |G\rangle 
&= Z_c H_c \left( \prod_{k \in L} Z_k |G\rangle \right) \\
&= Z_c H_c (X_c |G\rangle) \quad \text{since } \textstyle\prod_{k \in L} Z_k |G\rangle = X_c |G\rangle \\
&= Z_c (Z_c H_c) |G\rangle \quad \text{since}\  H_c X_c = Z_c H_c \\
&= (Z_c^2) H_c |G\rangle \\
&= H_c |G\rangle.
\end{align*}
The equality holds up to global phase, with measurement outcomes harmonized through the star entanglement structure. 

\textit{Remark on leaf cardinality}: The proof remains valid for all $|L| \geq 1$ because:
\begin{enumerate}
    \item For $|L|=1$: The stabilizer relation $\prod_{k\in L}Z_k |G\rangle = Z_k |G\rangle = X_c |G\rangle$ holds directly
    \item For $|L|=2$: $Z_{k_1}Z_{k_2} |G\rangle = X_c |G\rangle$ follows from $K_c |G\rangle = |G\rangle$
    \item For $|L|\geq3$: The product $\prod Z_k |G\rangle$ consistently equals $X_c |G\rangle$
\end{enumerate}
This universal validity underscores the operator equivalence's robustness across all star topology scales.
\end{proof}

\noindent \textbf{Theorem \ref{thm:x-simplify} (Simplification Theorem for $|+x\rangle$ Measurement).} 
When $\mathcal{N}_i(G)$ constitutes an independent set and the $X$-basis measurement at vertex $i$ yields $|+x\rangle$, the recovery operation simplifies to:
\begin{equation}
(U_+^{(i,j)})^{-1} = H_j.
\end{equation}

\begin{proof}
Since $\mathcal{N}(i)$ is an independent set and $j \in \mathcal{N}(i)$, it follows that $j$ cannot be adjacent to any other vertex in $\mathcal{N}(i)$ in the original graph $G$. Therefore:
\[
S_+ = \mathcal{N}(i) \setminus (\mathcal{N}(j) \cup \{j\}) = \mathcal{N}(i) \setminus \{j\} = S.
\]

By Lemma~\ref{lem:star-formation}, the subgraph induced by $\{j\} \cup S$ in $|G''_{-i}\rangle$ forms a star topology.

Consider the post-measurement state:
\begin{align*}
U_+^{(i,j)} |G''\rangle &= \left( R_y(-\pi/2)_j \otimes \bigotimes_{u \in S_+} Z_u \right) |G''\rangle \\
&= \left( Z_j H_j \otimes \bigotimes_{u \in S} Z_u \right) |G''\rangle.
\end{align*}

Applying Lemma~\ref{lem:star-phase} with $c = j$ and $L = S$:
\[
Z_j H_j \bigotimes_{u \in S} Z_u |G''\rangle = H_j |G''\rangle,
\]
which implies $U_+^{(i,j)} |G''\rangle = H_j |G''\rangle$. 
Since $H_j$ is self-inverse, the recovery operation is:
\[
(U_+^{(i,j)})^{-1} = H_j.
\]
\end{proof}

\section{Proof of Protocol 1}
\label{app:protocol1}

\subsection*{Initial Resource Preparation}
The protocol initializes with the relay node $\mathcal{R}$ establishing Bell pairs with all peripheral nodes. The initial resource state is:
\begin{equation}
    \ket{\Psi_0} = \left( \bigotimes_{X \in \mathcal{P}} \ket{\Phi^+}_{X_1 X_2} \right) \otimes \ket{\Phi^+}_{T_1 T_2}, 
\end{equation}
where $\ket{\Phi^+} = \frac{1}{\sqrt{2}}(\ket{00} + \ket{11})$, $\mathcal{P} = \{A,B,C,\ldots\}$ denotes the $n$ peripheral nodes, and $T_1, T_2$ are the selected Bell pair designated as the control center.

\subsection*{Step 1: Basis Transformation}
Each Bell pair is transformed by local Hadamard gates on the relay qubits. For the peripheral links, $H_{X_2} \ket{\Phi^+} \mapsto \ket{\LC}_{X_1 X_2} \equiv \frac{1}{\sqrt{2}} (\ket{0}_{X_1}\ket{+}_{X_2} + \ket{1}_{X_1}\ket{-}_{X_2})$. Applying $H_{T_2}$ to the center link and expanding, the global state becomes:
\begin{equation}
\begin{split}
    \ket{\Psi_1} &= \frac{1}{\sqrt{2^{n+1}}} \Bigg( \ket{+}_{T_1}\ket{0}_{T_2} \\
    &\quad + \ket{-}_{T_1}\ket{1}_{T_2} \Bigg) \bigotimes_{X \in \mathcal{P}} \ket{\LC}_{X_1 X_2}.
\end{split}
\label{eq:B_psi1}
\end{equation}

\subsection*{Step 2: Entanglement Expansion}
The global Controlled-Z (CZ) operation applied between $T_2$ and all $X_2$ qubits acts conditionally based on the state of $T_2$:
\begin{align}
    \CZ_{T_2 X_2} \ket{0}_{T_2} \ket{\psi}_{X_2} &= \ket{0}_{T_2} \ket{\psi}_{X_2}, \\
    \CZ_{T_2 X_2} \ket{1}_{T_2} \ket{\pm}_{X_2} &= \ket{1}_{T_2} \ket{\mp}_{X_2}.
\end{align}
Applying $\prod_{X \in \mathcal{P}} \CZ_{T_2 X_2}$ to $\ket{\Psi_1}$, the component associated with $\ket{0}_{T_2}$ remains unchanged ($\ket{\phi_0} \equiv \sqrt{2}\ket{\LC}_{X_1 X_2}$), while the component associated with $\ket{1}_{T_2}$ undergoes a local $Z_{X_2}$ transformation ($\ket{\phi_1} \equiv \sqrt{2}(I \otimes Z)\ket{\LC}_{X_1 X_2}$). This generates the intermediate state:
\begin{equation}
\begin{split}
    \ket{\Psi_2} &= \frac{1}{\sqrt{2^{n+1}}} \Bigg( \ket{+}_{T_1}\ket{0}_{T_2} \bigotimes_{X \in \mathcal{P}} \ket{\phi_0}_{X_1 X_2} \\
    &\quad + \ket{-}_{T_1}\ket{1}_{T_2} \bigotimes_{X \in \mathcal{P}} \ket{\phi_1}_{X_1 X_2} \Bigg).
\end{split}
\end{equation}

\subsection*{Step 3: Measurement-Induced Collapse}
We next measure all $X_2$ qubits in the X-basis with outcomes $\mathbf{m} = \{m_X \in \{0,1\}\}$. The measurement projector is $\Pi_{\mathbf{m}} = \bigotimes_{X \in \mathcal{P}} \ket{m_X}\bra{m_X}_{X_2}$, where $\ket{m_X} = \ket{+}$ for $m_X=0$ and $\ket{-}$ for $m_X=1$. Evaluating the inner products for each node pair yields:
\begin{align}
    \bra{m_X}_{X_2} \ket{\phi_0}_{X_1 X_2} &= \ket{m_X}_{X_1}, \\
    \bra{m_X}_{X_2} \ket{\phi_1}_{X_1 X_2} &= \ket{m_X \oplus 1}_{X_1}.
\end{align}
Applying $\Pi_{\mathbf{m}}$ projects $\ket{\Psi_2}$ onto two orthogonal branches. Given the orthogonality, the success probability of this specific outcome configuration is exactly $p(\mathbf{m}) = \|\Pi_{\mathbf{m}} \ket{\Psi_2}\|^2 = 1/2^n$. Normalizing the state by dividing by $\sqrt{p(\mathbf{m})}$ yields the collapsed state:
\begin{equation}
\begin{split}
    \ket{\Psi_3} &= \frac{1}{\sqrt{2}} \Bigg( \ket{+}_{T_1}\ket{0}_{T_2} \bigotimes_{X} \ket{m_X}_{X_1} \\
    &\quad + \ket{-}_{T_1}\ket{1}_{T_2} \bigotimes_{X} \ket{m_X \oplus 1}_{X_1} \Bigg).
\end{split}
\end{equation}

\subsection*{Step 4: Deterministic Recovery and Final State}
To deterministically route the entanglement, we designate $X_1$ as the reference neighbor for each peripheral node $X \in \mathcal{P}$. This generates the topological sets:
\begin{align}
S_+ &= \mathcal{N}(X) \setminus (\mathcal{N}(X_1) \cup \{X_1\}) = \{T_1\}, \\
S_- &= \mathcal{N}(X_1) \setminus (\mathcal{N}(X) \cup \{X\}) = \emptyset.
\end{align}
The condition $S_- = \emptyset$ guarantees that no $Z$-corrections are needed for $|-x\rangle_X$ outcomes. Furthermore, since $\mathcal{N}(X) = \{T_1, X_1\}$ forms an independent set, the prerequisite of Theorem~\ref{thm:x-simplify} is strictly satisfied. 

By partitioning the peripheral nodes based on their measurement outcomes as $\mathcal{P}_0 = \{X \mid m_X = 0\}$ and $\mathcal{P}_1 = \{X \mid m_X = 1\}$, the protocol applies the local corrective unitary operation:
\begin{equation}
    \mathcal{U}_{\text{rec}} = \bigotimes_{X \in \mathcal{P}_0} H_{X_1} \otimes \bigotimes_{X \in \mathcal{P}_1} (Z_{X_1}H_{X_1} ).
\end{equation}

Let $\mathcal{C}_X$ represent this conditional correction on node $X_1$. It homogeneously maps the computational basis states:
\begin{align}
    \mathcal{C}_X \ket{m_X}_{X_1} &= \ket{+}_{X_1} \quad \text{for all } X \in \mathcal{P}, \\
    \mathcal{C}_X \ket{m_X \oplus 1}_{X_1} &= \ket{-}_{X_1} \quad \text{for all } X \in \mathcal{P}.
\end{align}
Applying $\mathcal{U}_{\text{rec}}$ to $\ket{\Psi_3}$ thus disentangles the measurement dependencies, homogenizing the peripheral qubit states:
\begin{equation}
\begin{split}
    \mathcal{U}_{\text{rec}} \ket{\Psi_3} &= \frac{1}{\sqrt{2}} \Bigg( \ket{+}_{T_1}\ket{0}_{T_2} \bigotimes_{X} \ket{+}_{X_1} \\
    &\quad + \ket{-}_{T_1}\ket{1}_{T_2} \bigotimes_{X} \ket{-}_{X_1} \Bigg).
\end{split}
\end{equation}
Recognizing that the neighborhood of the terminal node $T_2$ is exactly $N(T_2) = \{T_1\} \cup \{X_1 \mid X \in \mathcal{P}\}$, the state is compactly rewritten as the target star graph:
\begin{equation}
\ket{\Psi_4} = \frac{1}{\sqrt{2}} \left( \ket{0}_{T_2} \bigotimes_{v \in N(T_2)} \ket{+}_v \mkern-6mu + \mkern-6mu \ket{1}_{T_2} \bigotimes_{v \in N(T_2)} \ket{-}_v \right).
\end{equation}

\section{Proof of Protocol 2}
\label{app:protocol2}

\subsection*{Initial Resource Preparation}
The protocol initializes with a star graph state centered on node $T$:
\begin{equation}\label{eq:initial_state}
|G\rangle = \left( \prod_{v \in N(T)} \!\!\!\cz_{T v} \right) |+\rangle^{\otimes |V|},
\end{equation}
where $V = \{T\} \cup N(T)$ with $|V| = d+1$, and $N(T) = \{A,B,C,\ldots\}$ denotes $d$ peripheral nodes. The state admits the fundamental decomposition:
\begin{equation}\label{eq:decomp}
|G\rangle = \frac{1}{\sqrt{2}} \left( |0\rangle_T \bigotimes_{v \in N(T)} |+\rangle_v \mkern-6mu + \mkern-6mu |1\rangle_T \bigotimes_{v \in N(T)} |-\rangle_v \right).
\end{equation}

\subsection*{Step 1: Center Measurement and State Collapse}
Measurement of the original node $T$ in the Pauli-X basis yields outcomes $M_T \in \{0,1\}$. Decomposing the computational basis using X-basis representations ($|0\rangle_T = \frac{|+\rangle_T + |-\rangle_T}{\sqrt{2}}$ and $|1\rangle_T = \frac{|+\rangle_T - |-\rangle_T}{\sqrt{2}}$), substitution into Eq.~\eqref{eq:decomp} yields:
\begin{equation}
\begin{split}
|G\rangle &= \frac{1}{2} \Bigg[ |+\rangle_T \left( \bigotimes_{v} |+\rangle_v + \bigotimes_{v} |-\rangle_v \right) \\
&\quad + |-\rangle_T \left( \bigotimes_{v} |+\rangle_v - \bigotimes_{v} |-\rangle_v \right) \Bigg].
\end{split}
\end{equation}

The projection operators for the measurement outcomes are $\Pi_{M_T=0} = \ket{+}\bra{+}_T$ and $\Pi_{M_T=1} = \ket{-}\bra{-}_T$. Applying these projectors, both branches yield a success probability of $p(M_T=0) = p(M_T=1) = 1/2$. The normalized post-measurement states are derived as:
\begin{align}
\ket{\psi_{\text{post}}^{(0)}} &= \frac{1}{\sqrt{2}} \left( \bigotimes_{v \in N(T)} |+\rangle_v + \bigotimes_{v \in N(T)} |-\rangle_v \right), \quad M_T = 0, \\
\ket{\psi_{\text{post}}^{(1)}} &= \frac{1}{\sqrt{2}} \left( \bigotimes_{v \in N(T)} |+\rangle_v - \bigotimes_{v \in N(T)} |-\rangle_v \right), \quad M_T = 1.
\end{align}

\subsection*{Step 2: Topology Reconfiguration via Deterministic Recovery}
Here, the recovery operation has the same form as the recovery operation in Protocol 1. When designating the new center $A$ as the reference neighbor, the topological sets exhibit critical degeneracies:
\begin{align}
S_+ &= \mathcal{N}(T) \setminus (\mathcal{N}(A) \cup \{A\}) = \{B,C,\ldots\}, \\
S_- &= \mathcal{N}(A) \setminus (\mathcal{N}(T) \cup \{T\}) = \emptyset.
\end{align}

These set properties exhibit two fundamental simplification characteristics. First, $S_- = \emptyset$ signifies that for $|-x\rangle_A$ measurement outcomes, no $Z$-corrections are required on the corresponding neighborhood set, as the correction domain degenerates to the null set. Second, the set $\mathcal{N}(T) = \{A,B,C,\ldots\}$ clearly satisfies the independence condition of Theorem~\ref{thm:x-simplify} due to the bipartite connectivity structure of a star topology. Consequently, the topological constraints $S_- = \emptyset$ and the bipartite isolation of $S_+$ reduce the recovery operation directly to a local unitary on $A$:
\begin{equation}
\mathcal{U}_{\text{rec}} = 
\begin{cases} 
    H_A & M_T = 0, \\
    Z_A H_A & M_T = 1. 
\end{cases}
\end{equation}

The correction procedure is analyzed for both measurement outcomes. For the case $M_T=0$, applying the Hadamard gate to node $A$ transforms the post-measurement state:
\begin{equation}
H_A \ket{\psi_{\text{post}}^{(0)}} = \frac{1}{\sqrt{2}} \left( |0\rangle_A \bigotimes_{u \neq A} |+\rangle_u + |1\rangle_A \bigotimes_{u \neq A} |-\rangle_u \right).
\end{equation}

For the case $M_T=1$, the sequential application of Hadamard and phase gates yields:
\begin{equation}
H_A \ket{\psi_{\text{post}}^{(1)}} = \frac{1}{\sqrt{2}} \left( |0\rangle_A \bigotimes_{u \neq A} |+\rangle_u - |1\rangle_A \bigotimes_{u \neq A} |-\rangle_u \right),
\end{equation}
and applying $Z_A$ flips the relative phase to recover the target configuration:
\begin{equation}
Z_A H_A \ket{\psi_{\text{post}}^{(1)}} = \frac{1}{\sqrt{2}} \left( |0\rangle_A \bigotimes_{u \neq A} |+\rangle_u + |1\rangle_A \bigotimes_{u \neq A} |-\rangle_u \right).
\end{equation}

Both cases converge to the identical canonical form after correction:
\begin{equation}
|G_{\text{final}}\rangle = \frac{1}{\sqrt{2}} \left( |0\rangle_A \bigotimes_{u \in N'(A)} |+\rangle_u + |1\rangle_A \bigotimes_{u \in N'(A)} |-\rangle_u \right),
\end{equation}
which is exactly the star graph state $\left( \prod_{u \in N'(A)} \cz_{A u} \right) |+\rangle^{\otimes |V|-1}$ centered at node $A$, establishing a new star topology with preserved peripheral connectivity $N'(A) = N(T) \setminus \{A\}$.

\section{Proof of Protocol 3}
\label{app:protocol3}

\subsection*{Initial Resource Preparation}
The protocol begins with two independent star-shaped graph states:
\begin{align}
|G_A\rangle &= \frac{1}{\sqrt{2}} \Bigg( |0\rangle_{C_A} \bigotimes_{u \in N_A} |+\rangle_u + |1\rangle_{C_A} \bigotimes_{u \in N_A} |-\rangle_u \Bigg), \\
|G_B\rangle &= \frac{1}{\sqrt{2}} \Bigg( |0\rangle_{C_B} \bigotimes_{v \in N_B} |+\rangle_v + |1\rangle_{C_B} \bigotimes_{v \in N_B} |-\rangle_v \Bigg),
\end{align}
where $C_A$ and $C_B$ are the central nodes, $N_A$ and $N_B$ are peripheral nodes, and $B_1 \in N_B$ is designated as the connection point. The combined initial state is $|\Psi_0\rangle = |G_A\rangle \otimes |G_B\rangle$.

\subsection*{Step 1: Entanglement Link Generation}
Applying the controlled-Z (CZ) operator between $C_A$ and $B_1$ acts as a conditional phase flip. When $C_A = |0\rangle$, $B_1$ remains unchanged; when $C_A = |1\rangle$, the basis states $|+\rangle_{B_1}$ and $|-\rangle_{B_1}$ in $G_B$ are flipped to $|-\rangle_{B_1}$ and $|+\rangle_{B_1}$, respectively. The CZ operation transforms the state as:
\begin{equation}
\begin{split}
|\Psi_1\rangle &= \frac{1}{\sqrt{2}} \Bigg( |0\rangle_{C_A} \bigotimes_{u \in N_A} |+\rangle_u \otimes |G_B\rangle \\
&\quad + |1\rangle_{C_A} \bigotimes_{u \in N_A} |-\rangle_u \otimes |\widetilde{G_B}\rangle \Bigg),
\end{split}
\end{equation}

where the transformed $G_B$ component is:

\begin{equation}
|\widetilde{G_B}\rangle = |0\rangle_{C_B} |-\rangle_{B_1} \bigotimes_{v \neq B_1} |+\rangle_v + |1\rangle_{C_B} |+\rangle_{B_1} \bigotimes_{v \neq B_1} |-\rangle_v.
\end{equation}

\subsection*{Step 2: Measurement-Induced Graph Fusion}
We measure node $B_1$ in the Pauli-X basis. The projection onto $M_{B_1}=0$ ($\langle +|_{B_1}$) evaluates the inner products with the respective $G_B$ and $\widetilde{G_B}$ components:
\begin{equation}
\begin{split}
\langle +|_{B_1} |\Psi_1\rangle &= \frac{1}{2} \Bigg( |0\rangle_{C_A} \bigotimes_{u \in N_A} |+\rangle_u \langle +|_{B_1} |G_B\rangle \\
&\quad + |1\rangle_{C_A} \bigotimes_{u \in N_A} |-\rangle_u \langle +|_{B_1} |\widetilde{G_B}\rangle \Bigg) \\
&= \frac{1}{2} \Bigg( |0\rangle_{C_A} \bigotimes_{u \in N_A} |+\rangle_u |0\rangle_{C_B} \bigotimes_{v \neq B_1} |+\rangle_v \\
&\quad + |1\rangle_{C_A} \bigotimes_{u \in N_A} |-\rangle_u |1\rangle_{C_B} \bigotimes_{v \neq B_1} |-\rangle_v \Bigg).
\end{split}
\end{equation}
Similarly, projection onto $M_{B_1}=1$ ($\langle -|_{B_1}$) yields:
\begin{equation}
\begin{split}
\langle -|_{B_1} |\Psi_1\rangle &= \frac{1}{2} \Bigg( |0\rangle_{C_A} \bigotimes_{u \in N_A} |+\rangle_u |1\rangle_{C_B} \bigotimes_{v \neq B_1} |-\rangle_v \\
&\quad + |1\rangle_{C_A} \bigotimes_{u \in N_A} |-\rangle_u |0\rangle_{C_B} \bigotimes_{v \neq B_1} |+\rangle_v \Bigg).
\end{split}
\end{equation}
For both outcomes, the normalization factor is derived from the squared norms $\left\| \langle \pm|_{B_1} |\Psi_1\rangle \right\|^2 = 1/4 + 1/4 = 1/2$. Multiplying by $\sqrt{2}$ yields the normalized post-measurement states $|\Psi^{(0)}\rangle$ and $|\Psi^{(1)}\rangle$.

\subsection*{Step 3: Topology Regularization via Local Recovery}
The recovery operation now targets node $C_A$ as the reference neighbor. The topological sets are computed relative to the measured node $B_1$ and reference neighbor $C_A$:
\begin{align}
S_+ &= \mathcal{N}(B_1) \setminus (\mathcal{N}(C_A) \cup \{C_A\}) = \{C_B\}, \\
S_- &= \mathcal{N}(C_A) \setminus (\mathcal{N}(B_1) \cup \{B_1\}) = N_A.
\end{align}
Critical topological properties emerge from the computed sets: The neighborhood $\mathcal{N}(B_1) = \{C_A, C_B\}$ constitutes an independent set, satisfying the independence prerequisite of Theorem~\ref{thm:x-simplify}. Concurrently, $S_- = N_A$ necessitates $Z$-corrections on $C_A$'s peripheral nodes for $|-x\rangle_{B_1}$ measurement outcomes. 

Application of Theorem~\ref{thm:x-simplify} yields the minimal recovery operator:
\begin{equation}
\mathcal{U}_{\text{rec}} = 
\begin{cases} 
    H_{C_A} & M_{B_1} = 0, \\
    \left( \prod_{u \in N_A} Z_u \right) (Z_{C_A} H_{C_A}) & M_{B_1} = 1.
\end{cases}
\end{equation}

For the case $M_{B_1}=0$, applying $H_{C_A}$ directly transforms the core qubits:
\begin{equation}
\begin{split}
H_{C_A} |\Psi^{(0)}\rangle &= \frac{1}{\sqrt{2}} \Bigg( |+\rangle_{C_A} \bigotimes_{u \in N_A} |+\rangle_u |0\rangle_{C_B} \bigotimes_{v \neq B_1} |+\rangle_v \\
&\quad + |-\rangle_{C_A} \bigotimes_{u \in N_A} |-\rangle_u |1\rangle_{C_B} \bigotimes_{v \neq B_1} |-\rangle_v \Bigg).
\end{split}
\end{equation}

For the case $M_{B_1}=1$, sequential application of $H_{C_A}$ and $Z_{C_A}$ yields:
\begin{equation}
\begin{split}
Z_{C_A} H_{C_A} |\Psi^{(1)}\rangle &= \frac{1}{\sqrt{2}} \Bigg( |-\rangle_{C_A} \bigotimes_{u \in N_A} |+\rangle_u |1\rangle_{C_B} \bigotimes_{v \neq B_1} |-\rangle_v \\
&\quad + |+\rangle_{C_A} \bigotimes_{u \in N_A} |-\rangle_u |0\rangle_{C_B} \bigotimes_{v \neq B_1} |+\rangle_v \Bigg).
\end{split}
\end{equation}
Applying the $Z$ gates to all peripheral nodes in $N_A$ flips the relative phases of the $u$ registers, rectifying the state to:
\begin{equation}
\begin{split}
&\left( \prod_{u \in N_A} Z_u \right) Z_{C_A} H_{C_A} |\Psi^{(1)}\rangle \\
&= \frac{1}{\sqrt{2}} \Bigg( |-\rangle_{C_A} \bigotimes_{u \in N_A} |-\rangle_u |1\rangle_{C_B} \bigotimes_{v \neq B_1} |-\rangle_v \\
&\quad + |+\rangle_{C_A} \bigotimes_{u \in N_A} |+\rangle_u |0\rangle_{C_B} \bigotimes_{v \neq B_1} |+\rangle_v \Bigg).
\end{split}
\end{equation}

Both cases converge identically. Rewriting the final outcome in the standard star graph format centered at $C_B$, we obtain:
\begin{equation}
|G_{\text{fused}}\rangle = \frac{1}{\sqrt{2}} \left( |0\rangle_{C_B} \bigotimes_{w \in N_{\text{fused}}} |+\rangle_w + |1\rangle_{C_B} \bigotimes_{w \in N_{\text{fused}}} |-\rangle_w \right),
\end{equation}

\end{document}